\newcolumntype{P}[1]{>{\centering\arraybackslash}p{#1}}
\newcommand{\classize}[1]{{\ensuremath{\mathsf{#1}}}\xspace}
\newcommand{\pspace}{\classize{PSPACE}}
\newcommand{\expc}{\classize{EXP}}
\newcommand{\cP}{\classize{P}}
\newcommand{\NP}{\classize{NP}}
\newcommand{\conp}{\classize{coNP}}
\newcommand{\us}{\classize{US}}
\newcommand{\dpc}{\classize{DP}}
\newcommand{\ap}{\classize{AP}}
\newcommand{\logicize}[1]{{\ensuremath{\mathbf{#1}}}\xspace}
\newcommand{\hml}{\logicize{HML}}
\newcommand{\mathL}{\ensuremath{\mathcal{L}}\xspace}
\newcommand{\mathS}{\ensuremath{\mathcal{S}}\xspace}
\newcommand{\true}{\ensuremath{\mathbf{tt}}\xspace}
\newcommand{\ff}{\ensuremath{\mathbf{ff}}\xspace}
\newcommand{\curle}{\lesssim}
\newcommand{\notcurle}{\not\curle}
\newcommand{\act}{\ensuremath{\mathtt{Act}}\xspace}
\newcommand{\proc}{\ensuremath{\mathtt{Proc}}\xspace}
\newcommand{\sub}{\ensuremath{\mathrm{Sub}}}
\newcommand{\expphi}{\ensuremath{\mathrm{Exp}(\varphi)}\xspace}
\newcommand{\md}{\ensuremath{\mathrm{md}}}
\newcommand{\depth}{\ensuremath{\mathrm{depth}}}
\newcommand{\grcd}{\ensuremath{\mathrm{mlb}}}
\newcommand{\reach}{\ensuremath{\mathrm{reach}}}
\newcommand{\trim}{\ensuremath{\mathrm{trim}}}
\newcommand{\inlabeli}{\ensuremath{L_i^\mathtt{in}}\xspace}
\newcommand{\finlabeli}{\ensuremath{L_i^\mathtt{fin}}\xspace}
\newcommand{\inlabelone}{\ensuremath{L_1^\mathtt{in}}\xspace}
\newcommand{\finlabelone}{\ensuremath{L_1^\mathtt{fin}}\xspace}
\newcommand{\inlabelthree}{\ensuremath{L_3^\mathtt{in}}\xspace}
\newcommand{\finlabelthree}{\ensuremath{L_3^\mathtt{fin}}\xspace}
\newcommand{\starlabelthree}{\ensuremath{L_3^{\mathrm{sub}}}\xspace}
\newcommand{\wit}{\ensuremath{\mathrm{wit}_{\not\curle_S}}\xspace}
\newcommand{\witproc}{\ensuremath{\mathrm{witproc}_{\not\curle_S}}\xspace}
\newcommand{\cop}{\ensuremath{\mathrm{copy}}\xspace}
\newcommand{\simequiv}{\ensuremath{\mathtt{char1se}}\xspace}
\newcommand{\nsimeq}{\ensuremath{\mathtt{charnse}}\xspace}
\newcommand{\isimeq}{\ensuremath{\mathtt{charise}}\xspace}
\newcommand{\nsimpre}{\ensuremath{\mathtt{primensp}}\xspace}
\newcommand{\nmosimeq}{\ensuremath{\mathrm{char(n-1)se}}\xspace}
\newcommand{\simab}{\ensuremath{\mathtt{Sim_{A,B}}}\xspace}
\newcommand{\nmosimab}{\ensuremath{\mathtt{Sim^{n-1}_{A,B}}}\xspace}
\newcommand{\nsimab}{\ensuremath{\mathtt{Sim^{n}_{A,B}}}\xspace}
\newcommand{\isimab}{\ensuremath{\mathtt{Sim^{i}_{A,B}}}\xspace}
\newcommand\myarrowone{\mathrel{\stackrel{\makebox[0pt]{\mbox{\normalfont \scriptsize $1$}}}{\longrightarrow}}}
\newcommand\myarrowk{\mathrel{\stackrel{\makebox[0pt]{\mbox{\normalfont \scriptsize $k$}}}{\longrightarrow}}}
\newcommand\myarrowi{\mathrel{\stackrel{\makebox[0pt]{\mbox{\normalfont \scriptsize $i$}}}{\longrightarrow}}}
\newcommand\myarrowa{\mathrel{\stackrel{\makebox[0pt]{\mbox{\normalfont \scriptsize $a$}}}{\longrightarrow}}}
\newcommand\myarrowb{\mathrel{\stackrel{\makebox[0pt]{\mbox{\normalfont \scriptsize $b$}}}{\longrightarrow}}}
\newcommand\myarrowasubi{\mathrel{\stackrel{\makebox[0pt]{\mbox{\normalfont \scriptsize $a_i$}}}{\longrightarrow}}}
\newcommand\myarrowasubone{\mathrel{\stackrel{\makebox[0pt]{\mbox{\normalfont \scriptsize $a_{i_1}$}}}{\longrightarrow}}}
\newcommand\myarrowasubtwo{\mathrel{\stackrel{\makebox[0pt]{\mbox{\normalfont \scriptsize $a_{i_2}$}}}{\longrightarrow}}}
\newcommand\myarrowasubm{\mathrel{\stackrel{\makebox[0pt]{\mbox{\normalfont \scriptsize $a_{i_m}$}}}{\longrightarrow}}}
\newcommand\myarrowasubk{\mathrel{\stackrel{\makebox[0pt]{\mbox{\normalfont \scriptsize $a_{i_k}$}}}{\longrightarrow}}}
\newcommand\notmyarrowasubk{\mathrel{\stackrel{\makebox[0pt]{\mbox{\normalfont \scriptsize $a_{i_k}$}}}{\not\rightarrow}}}
\newcommand\myarrowasubkplusone{\mathrel{\stackrel{\makebox[0pt]{\mbox{\normalfont \scriptsize $a_{i_{(k+1)}}$}}}{\longrightarrow}}}
\newcommand\myarrowasubkminusone{\mathrel{\stackrel{\makebox[0pt]{\mbox{\normalfont \scriptsize $a_{i_{(k-1)}}$}}}{\longrightarrow}}}
\newcommand\myarrowasubj{\mathrel{\stackrel{\makebox[0pt]{\mbox{\normalfont \scriptsize $a_j$}}}{\longrightarrow}}}
\newcommand\notmyarrowb{\mathrel{\stackrel{\makebox[0pt]{\mbox{\normalfont \scriptsize $b$}}}{\not\rightarrow}}}
\newcommand\notmyarrowasubi{\mathrel{\stackrel{\makebox[0pt]{\mbox{\normalfont \scriptsize $a_i$}}}{\not\rightarrow}}}
\newcommand\myarrowtau{\mathrel{\stackrel{\makebox[0pt]{\mbox{\normalfont \scriptsize $t$}}}{\longrightarrow}}}
\newcommand\notmyarrowa{\mathrel{\stackrel{\makebox[0pt]{\mbox{\normalfont \scriptsize $a$}}}{\not\rightarrow}}}
\newcommand\vartextvisiblespace[1][.5em]{%
  \makebox[#1]{%
    \kern.07em
    \vrule height.3ex
    \hrulefill
    \vrule height.3ex
    \kern.07em
  }
}
\newtheorem{theorem}{Theorem}
\newtheorem{lemma}[theorem]{Lemma} 
\newtheorem{corollary}[theorem]{Corollary} 
\newtheorem{proposition}[theorem]{Proposition}
\theoremstyle{definition}
\newtheorem{definition}{Definition}
\newtheorem{example}[theorem]{Example}
\theoremstyle{remark}
\newtheorem{remark}{Remark}
\newtheorem{claim}{Claim}
\newenvironment{claimproof}{\paragraph{\textcolor{darkgray}{\textit{Proof of claim.}}}}{\hfill$\blacktriangleleft$}
\newenvironment{proofsketch}{\paragraph{\textcolor{darkgray}{\textit{Proof sketch.}}}}{\hfill$\square$}
\title{The Complexity of Deciding Characteristic Formulae Modulo Nested Simulation (extended abstract)}
\author{Luca Aceto
\institute{Department of Computer Science\\ Reykjavik University\\ Reykjavik, Iceland}
\institute{Gran Sasso Science Institute\\ L'Aquila, Italy
}
\email{luca@ru.is}
\and
Antonis Achilleos 
\institute{Department of Computer Science\\ Reykjavik University\\ Reykjavik, Iceland}
\email{antonios@ru.is}
\and
Aggeliki Chalki
\institute{Department of Computer Science\\ Reykjavik University\\ Reykjavik, Iceland}
\email{angelikic@ru.is}
\and 
Anna Ing\'olfsd\'ottir
\institute{Department of Computer Science\\ Reykjavik University\\ Reykjavik, Iceland}
\email{annai@ru.is}
}
\begin{document}
\maketitle

\begin{abstract}
This paper studies the complexity of determining whether a formula in the modal logics characterizing the nested-simulation semantics is characteristic for some process, which is equivalent to determining whether the formula is satisfiable and prime. The main results are that the problem of determining whether a formula is prime in the modal logic characterizing the 2-nested-simulation preorder is {\conp}-complete and is {\pspace}-complete in the case of the $n$-nested-simulation preorder, when $n\geq 3$. This establishes that deciding characteristic formulae for the $n$-nested simulation semantics is \pspace-complete, when $n\geq 3$. In the case of the 2-nested simulation semantics, that problem lies in the complexity class \dpc,  which consists of languages that can be expressed as the intersection of one language in \NP and of one in \conp.
\end{abstract}

\section{Introduction}\label{section:intro}

Since the pioneering work by Hennessy and Milner~\cite{HennessyM85,Milner81}, behavioural equivalences and preorders over processes have been given logical characterizations, typically using some modal logic---see, for instance,~\cite{Glabbeek01} for a survey. Such characterizations state that two processes are related by some behavioural equivalence if, and only if, they satisfy the same properties expressible in some logic. 
A formula $\varphi$ is characteristic for a process $p$ with respect to some behavioural equivalence or preorder $\curle$ if every process $q$ satisfies $\varphi$ exactly when $p \curle q$ holds. 
Thus, characteristic formulae provide a complete logical description of processes up to some notion of behavioural equivalence or preorder. As shown in, e.g.,~\cite{AcetoMFI19,AcetoILS12,BrowneCG88,GrafS86a,SteffenI94}, the logics that underlie classic modal characterization theorems for equivalences and preorders over processes allow one to express characteristic formulae. Moreover, the procedures for constructing characteristic formulae presented in those works 
reduce equivalence and preorder checking to model checking---see~\cite{CleavelandS91} for an application of this approach. 
The converse reduction from model checking problems to equivalence and preorder checking problems is only possible when the logical specifications are characteristic formulae~\cite{BoudolL92}. This raises the natural question of how to determine whether a specification, expressed as a modal formula, is characteristic for some process and of the computational complexity of that problem. 
In~\cite{AcetoMFI19,AFEIP11,BoudolL92}, it was shown that characteristic formulae coincide with those that are both \emph{satisfiable and prime}. (A formula is prime if whenever it entails a disjunction $\varphi_1 \vee \varphi_2$, then it entails $\varphi_1$ or $\varphi_2$.) So, determining whether a formula is characteristic is equivalent to checking its satisfiability and primality. 

In the setting of the modal logics that characterize bisimilarity,
the problem of checking whether a formula is characteristic 
has the same complexity as validity checking, which is \pspace-complete for Hennessy-Milner logic (\hml) and \expc-complete for its extension with fixed-point operators and the $\mu$-calculus~\cite{AcetoAFI20,Achilleos18}. In~\cite{AcetoACI25}, we studied the complexity of the problem modulo the simulation-based semantics considered by van Glabbeek in his seminal linear-time/branching-time spectrum~\cite{Glabbeek01}. In op.~cit., we identified the complexity of satisfiability and primality for fragments of \hml that characterize various relations in the spectrum, delineating the boundary between the logics for which those problems can be solved in polynomial time and the logics for which they are computationally hard. 
Both satisfiablity and primality checking are decidable in polynomial time for simulation, complete simulation, and ready simulation when the set of actions has constant size.  Computational hardness already manifests itself in ready simulation semantics~\cite{BloomIM95,LarsenS91} when the size of the action set is not a constant: in this case, the problems of checking satisfiability and primality for formulae in the logic characterizing the ready simulation preorder are \NP-complete and \conp-complete, respectively. In the presence of at least two actions, for the logic characterizing the 2-nested-simulation preorder, satisfiability and primality checking are \NP-complete and \conp-hard, respectively, while deciding whether a formula is characteristic is \us-hard~\cite{BlassG82} (that is, it is at least as hard as the problem of deciding whether a given Boolean formula has exactly one satisfying truth assignment). Moreover, all three problems---satisfiability and primality checking, and deciding characteristic formulae---are \pspace-hard in the modal logic for the 3-nested-simulation preorder~\cite{GV92}. Additionally, deciding characteristic formulae modulo the equivalence relations induced by the 2-nested and 3-nested-simulation preorders is \conp-hard and \pspace-hard, respectively.\footnote{The family of nested-simulation equivalences and preorders were introduced by Groote and Vaandrager in~\cite{GV92}, where they proved that $2$-nested simulation equivalence is the completed trace congruence induced by the operators definable by rules in pure \emph{tyft/tyxt} format and that the intersection of all the $n$-nested simulation equivalences is bisimilarityfor processes satisfying a classic and mild finiteness condition.}

The work presented in~\cite{AcetoACI25} did not provide upper bounds on the complexity of the aforementioned problems for the family of $n$-nested-simulation semantics with $n\geq 2$. In this paper, we give algorithms showing that deciding whether a formula is prime in the logics characterizing the 2-nested and $n$-nested-simulation preorders, with $n\geq 3$, is in \conp and \pspace, respectively. These results, combined with previously-known lower bounds, imply that the problem is \conp-complete and \pspace-complete, respectively (Theorems~\ref{prop:3S-char-in-pspace} and~\ref{prop:2S-primality-conpc}). Taking into account the complexity of the satisfiability problem for the respective logics, we show that deciding characteristic formulae for the $n$-nested-simulation semantics is \pspace-complete when $n\geq 3$ (Corollaries~\ref{cor:n-char-equiv} and~\ref{cor:3s-char-pspacec}) and is in the complexity class \dpc for the 2-nested-simulation semantics (Corollary~\ref{cor:2S-char-dp}).

Our algorithms 
are based on two families of two-player, zero-sum games that are initiated on a modal formula $\varphi$ (Section~\ref{section:deciding-char-nS}). In the first type of game, defined for every $n\geq 1$, player $B$ constructs two structures corresponding to two arbitrary processes that satisfy $\varphi$ and player $A$ wins if she manages to show that the first process is $n$-nested-simulated by the second one. These games are used to determine whether all processes satisfying 
$\varphi$ are equivalent modulo the $n$-nested-simulation equivalence, and thus whether $\varphi$ 
 is characteristic modulo the $n$-nested-simulation equivalence relation. The second class of games, introduced for $n\geq 3$, is designed so that when a game is initiated on a satisfiable formula $\varphi$, a winning strategy for player $A$ is equivalent to $\varphi$ being prime, and hence characteristic, modulo the $n$-nested-simulation preorder. By adapting the algorithms that arise from these games, we provide \conp algorithms for the case of the 2-nested-simulation preorder and equivalence relation in Section~\ref{section:deciding-char-2S}.

The proofs of all the results announced in this paper may be found in the full version~\cite{AcetoACI2025full}.

\section{Preliminaries}\label{section:prelim}

\paragraph{Concurrency theory and logic}
In this paper, we model processes as finite, loop-free \emph{labelled transition systems} (LTS). A finite LTS is a triple $\mathS=(\proc,\act,\longrightarrow)$, where $\proc$ is a finite set of states (or processes), \act is a finite, non-empty set of actions and ${\longrightarrow}\subseteq \proc\times \act\times \proc$ is a transition relation. As usual, we use $p\myarrowa q$ instead of $(p,a,q)\in {\longrightarrow}$. For each $t\in \act^*$, we write $p\myarrowtau q$ to mean that there is a sequence of transitions labelled with $t$ starting from $p$ and ending at $q$. An LTS is \emph{loop-free} iff $p\myarrowtau p$ holds only when $t$ is the empty trace $\varepsilon$. A process $q$ is \emph{reachable} from $p$ if $p\myarrowtau q$, for some $t\in \act^*$.  
We define the \emph{size}  of an LTS $\mathS=(\proc,\act,\longrightarrow)$, denoted  $|\mathS|$, to be $|\proc|+|{\longrightarrow}|$. The \emph{size of a process} $p\in \proc$, denoted  $|p|$, is the cardinality of $\reach(p)=\{q~|~ q \text{ is reachable from } p\}$ plus the cardinality of the set $\longrightarrow$ restricted to $\reach(p)$. 
A sequence of actions $t\in \act^*$ is a trace of $p$ if there is a $q$ such that $p\myarrowtau q$.
The \emph{depth} of a finite, loop-free process $p$, denoted  $\depth(p)$, is the length of a longest trace $t$ of $p$.
%
In what follows, we shall often describe finite, loop-free processes using the fragment of Milner's CCS~\cite{Milner89} given by the grammar
$p ::= \mathtt{0} ~\mid~ a.p  ~\mid~  p+p$,
where $a \in \act$. For each action $a$ and terms $p,p'$, we write $p \myarrowa p'$ iff
(i) $p =a.p'$ or
(ii) $p =p_1+p_2$, for some $p_1,p_2$, and $p_1\myarrowa p'$ or $p_2\myarrowa p'$ holds.

We consider $n$-nested simulation for $n\geq 1$, and bisimilarity, which are defined below. 

\begin{definition}[\cite{Milner89,GV92}]\label{Def:beh-preorders}
We define each of the following preorders as the largest binary relation over $\proc$ that satisfies the corresponding condition.
\begin{enumerate}[(a)]
    \item \emph{Simulation preorder (S):} $p\curle_{S} q$ iff for all $p\myarrowa p'$ there exists some $q\myarrowa q'$ such that $p'\curle_S q'$.
    \item \emph{$n$-Nested simulation ($n$S)}, where $n\geq 1$, is defined inductively as follows: The $1$-nested simulation preorder $\curle_{1S}$ is $\curle_S$, and the $n$-nested simulation preorder $\curle_{nS}$ for $n > 1$ is the largest relation such that $p\curle_{nS} q$ iff
      (i) for all $p\myarrowa p'$ there exists some $q\myarrowa q'$ such that $p'\curle_{nS} q'$, and
      (ii) $q\curle_{(n-1)S} p$. 
   \item  \emph{Bisimilarity (BS):} $\curle_{BS}$ is the largest symmetric relation satisfying the condition defining $\curle_{S}$. 
\end{enumerate}
\end{definition}

It is well known that bisimilarity is an equivalence relation and all the relations $\curle_{nS}$ 
are preorders~\cite{Milner89,GV92}. We sometimes write $p\sim q$ instead of $p\curle_{BS}q$. Moreover, we have that 
${\sim}\subsetneq{\curle_{nS}}$  and ${\curle_{(n+1)S}}\subsetneq {\curle_{nS}}$ for every $n\geq 1$---see~\cite{Glabbeek01}. We say that $p$ is $n$-nested-simulated by $q$ when $p\curle_{nS} q$.

\begin{definition}[Kernels of the preorders]
    For each $n\geq 1$,  the kernel $\equiv_{nS}$ of $\curle_{nS}$ is the equivalence relation defined thus: for every $p,q\in \proc$, $p\equiv_{nS} q$ iff $p\curle_{nS} q$ and $q\curle_{nS} p$. We say that $p$ and $q$ are $n$-nested-simulation equivalent if $p\equiv_{nS} q$.
\end{definition}


Each relation $\curle_{nS}$, where $n\geq 1$, is characterized (see \Cref{prop:hmt} below) by a fragment $\mathL_{nS}$ of Hennessy-Milner logic, \hml, defined as follows.

\begin{definition}\label{def:mathlx}
For $X\in\{BS\} \cup \{nS\mid n\geq 1\}$, 
$\mathL_X$ is defined by the corresponding grammar given below ($a\in \act$):
\begin{enumerate}[(a)] 
\item $\mathL_S$ ($\mathL_{1S}$): 
$\varphi_S::= ~ \true ~ \mid ~ \ff ~ \mid ~ \varphi_S\wedge \varphi_S~ \mid ~\varphi_S\vee \varphi_S~ \mid ~ \langle a \rangle\varphi_S.$
\item $\mathL_{nS}$, $n\geq 2$: 
$\varphi_{nS}::= ~ \true ~ \mid ~ \ff ~ \mid ~ \varphi_{nS}\wedge \varphi_{nS} ~ \mid ~\varphi_{nS}\vee \varphi_{nS}
~ \mid ~ \langle a \rangle\varphi_{nS} ~ \mid ~ \neg\varphi_{(n-1)S}.$
\item $\hml$ ($\mathL_{BS}$):
    $\varphi_{BS}::= ~ \true ~ \mid ~ \ff ~ \mid ~ \varphi_{BS}\wedge \varphi_{BS} ~ \mid ~\varphi_{BS}\vee \varphi_{BS}~ \mid ~ \langle a \rangle\varphi_{BS} ~ \mid ~ [a]\varphi_{BS} ~ \mid ~ \neg\varphi_{BS}.$
\end{enumerate}
\end{definition}

Note that the explicit use of negation in the grammar for $\mathL_{BS}$ is unnecessary. However, we included the negation operator explicitly so that $\mathL_{BS}$ extends syntactically each of the other modal logics presented in Definition~\ref{def:mathlx}. 

Given a formula $\varphi\in\mathL_{BS}$, the \emph{modal depth} of $\varphi$, denoted  $\md(\varphi)$, is the maximum nesting of modal operators in $\varphi$. We define the \emph{size} of formula $\varphi$, denoted  $|\varphi|$, to be the number of symbols in $\varphi$. Finally, $\sub(\varphi)$ denotes the set of subformulae of formula $\varphi$.

Truth in an LTS $\mathS=(\proc,\act,\longrightarrow)$ is defined via the satisfaction relation $\models$ as follows, where we omit the standard clauses for the Boolean operators:
$$\begin{aligned}
    &p\models \langle a\rangle\varphi \text{ iff there is 
 some } p\myarrowa q \text{ such that } q\models\varphi;\\
    &p\models [a]\varphi \text{ iff for all } p\myarrowa q \text{ it holds that } q\models\varphi  .
\end{aligned}$$
If $p\models \varphi$, we say that $\varphi$ is true, or satisfied, in $p$. A formula $\varphi$ is \emph{satisfiable} if there is a process that satisfies it. 
 Formula $\varphi_1$ \emph{entails} $\varphi_2$, denoted  $\varphi_1\models \varphi_2$, if every process
that satisfies  $\varphi_1$ also satisfies   $\varphi_2$. Moreover, $\varphi_1$ and $\varphi_2$ are \emph{logically equivalent}, denoted  $\varphi_1\equiv\varphi_2$, if $\varphi_1\models \varphi_2$ and $\varphi_2\models \varphi_1$. For example, 
$\varphi_1\wedge(\varphi_2\vee\varphi_3)\equiv (\varphi_1\wedge\varphi_2)\vee(\varphi_1\wedge\varphi_3)$, 
for every $\varphi_1,\varphi_2,\varphi_3\in\mathL_{BS}$.

Given a process $p$ and $\mathL \subseteq \mathL_{BS}$, we define  $\mathL(p)=\{\varphi\in\mathL \mid p\models\varphi\}$. A simplification of the Hennessy-Milner theorem gives a modal characterization of bisimilarity over finite processes. An analogous result is true for every preorder examined in this paper.

\begin{proposition}\label{prop:hmt}[\cite{HennessyM85,GV92}]\label{logical_characterizations}
     For all processes $p,q$ in a finite LTS, $p\sim q$ iff $\mathL_{BS}(p)=\mathL_{BS}(q)$. Moreover, $p\curle_{nS} q$ iff $\mathL_{nS}(p)\subseteq \mathL_{nS}(q)$ for each $n\geq 1$.
\end{proposition}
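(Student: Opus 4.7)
The plan is to prove both claims by induction, using the standard Hennessy--Milner style argument but lifted to the $n$-nested setting. For the bisimulation clause, I would first handle soundness ($p \sim q \implies \mathL_{BS}(p) = \mathL_{BS}(q)$) by structural induction on $\varphi \in \mathL_{BS}$: the Boolean cases are immediate, and for $\langle a\rangle\varphi$ and $[a]\varphi$ I use the transfer property of the bisimulation together with the induction hypothesis. For completeness, I would show that the relation $R = \{(p,q) : \mathL_{BS}(p) = \mathL_{BS}(q)\}$ is itself a bisimulation; given $p \myarrowa p'$, if no $a$-successor $q'$ of $q$ satisfied $\mathL_{BS}(p') = \mathL_{BS}(q')$, then for each of the finitely many $q \myarrowa q_i$ I pick a distinguishing formula $\psi_i$, and the formula $\langle a\rangle\bigwedge_i \psi_i$ (or its dual) separates $p$ from $q$, a contradiction. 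Finiteness (hence image-finiteness) of the LTS is what makes this finite conjunction well-defined.

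For the nested-simulation clause I would proceed by induction on $n$. The base case $n=1$ is the simulation preorder characterized by the positive fragment $\mathL_S$: the forward direction is again a routine structural induction, and the converse shows that $\{(p,q) : \mathL_S(p) \subseteq \mathL_S(q)\}$ is a simulation via the same distinguishing-formula argument, which now only needs the $\langle a\rangle$-modality since $\mathL_S$ is negation-free.

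In the inductive step for $n \geq 2$, assume the result for $n-1$. For the forward direction $p\curle_{nS} q \implies \mathL_{nS}(p) \subseteq \mathL_{nS}(q)$, I perform structural induction on $\varphi \in \mathL_{nS}$; the only genuinely new case is $\varphi = \neg\psi$ with $\psi \in \mathL_{(n-1)S}$. Here I use clause (ii) of Definition~\ref{Def:beh-preorders}: $p\curle_{nS} q$ yields $q \curle_{(n-1)S} p$, and the induction hypothesis on $n$ gives $\mathL_{(n-1)S}(q) \subseteq \mathL_{(n-1)S}(p)$; if $q \models \psi$ then $p \models \psi$, contradicting $p \models \neg\psi$. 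For the converse, I show that $R_n = \{(p,q) : \mathL_{nS}(p) \subseteq \mathL_{nS}(q)\}$ is an $n$-nested simulation. Condition (i) is verified as in the base case, using finiteness to form a distinguishing formula of the shape $\langle a\rangle\bigwedge_i \psi_i \in \mathL_{nS}$. Condition (ii), namely $q \curle_{(n-1)S} p$, is obtained by first showing $\mathL_{(n-1)S}(q) \subseteq \mathL_{(n-1)S}(p)$: if $q \models \psi$ with $\psi \in \mathL_{(n-1)S} \subseteq \mathL_{nS}$, then $\neg\psi \in \mathL_{nS}$, so $p \not\models \neg\psi$ (otherwise the hypothesis $\mathL_{nS}(p) \subseteq \mathL_{nS}(q)$ would force $q \models \neg\psi$), hence $p \models \psi$; applying the induction hypothesis on $n$ then yields $q \curle_{(n-1)S} p$.

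The main obstacle I anticipate is the direction reversal between the preorder and the logical containment in the negation case: clause (ii) of $n$-nested simulation flips the arguments, so I must carefully check that $\mathL_{(n-1)S} \subseteq \mathL_{nS}$ (immediate from the grammars in Definition~\ref{def:mathlx}) and use the induction hypothesis on the \emph{reversed} pair $(q,p)$. The rest of the argument is a straightforward adaptation of the classical Hennessy--Milner proof and does not require more than the finiteness of the LTS invoked in the statement.
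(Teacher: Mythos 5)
Your proof is correct, but note that the paper does not prove this proposition at all: it is imported from the literature (Hennessy--Milner for the bisimulation clause, Groote--Vaandrager for the nested-simulation clauses), so there is no in-paper argument to compare against. What you have written is a faithful reconstruction of the standard proof from those sources: soundness by structural induction on the formula, completeness by exhibiting the logical-containment relation as a (nested) simulation via distinguishing formulae, with finiteness of the LTS guaranteeing image-finiteness and hence the well-definedness of the finite conjunction $\bigwedge_i \psi_i$. Your handling of the delicate points is right: the negation case of the forward direction correctly appeals to the outer induction hypothesis on the \emph{reversed} pair $(q,p)$ at level $n-1$, and condition (ii) of the converse is correctly reduced to $\mathL_{(n-1)S}(q)\subseteq\mathL_{(n-1)S}(p)$ via the closure of $\mathL_{nS}$ under negation of $\mathL_{(n-1)S}$-formulae. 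Two small points you would need to spell out in a full write-up: the inclusion $\mathL_{(n-1)S}\subseteq\mathL_{nS}$ is not literally immediate from the grammars but needs its own one-line induction on $n$ (the negation clause of $\mathL_{(n-1)S}$ admits $\neg\psi$ for $\psi\in\mathL_{(n-2)S}$, and you must use $\mathL_{(n-2)S}\subseteq\mathL_{(n-1)S}$ to see this lies in $\mathL_{nS}$); and in the distinguishing-formula step the case where $q$ has no $a$-successors should be covered by the empty conjunction, giving $\langle a\rangle\true$ as the separating formula.
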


\begin{definition}[\cite{BoudolL92,AFEIP11}]\label{def:prime-formula}
Let $\mathL\subseteq \mathL_{BS}$. 
A formula $\varphi\in \mathL_{BS}$ is \emph{prime in $\mathL$} if  $\varphi\models\varphi_1 \vee \varphi_2$ implies $\varphi\models \varphi_1$ or  $\varphi\models \varphi_2$, for all $\varphi_1,\varphi_2\in \mathL$. 
\end{definition}

When the logic $\mathL$ is clear from the context, we say that $\varphi$ is prime. Note that every unsatisfiable formula is trivially prime in $\mathL$, for every $\mathL$.

\begin{example}\label{ex:prime}
    The formula $\langle a\rangle \true$ is prime in $\mathL_S$. Indeed, let $\varphi_1,\varphi_2\in\mathL_S$ and assume that $\langle a\rangle \true\models\varphi_1\vee\varphi_2$. Since $a.\mathtt{0} \models\langle a\rangle \true$, without loss of generality, we have that $a.\mathtt{0} \models\varphi_1$. We claim that $\langle a\rangle\true\models\varphi_1$. To see this, let $p$ be some process such that $p\models\langle a\rangle\true$---that is, a process such that $p \myarrowa p'$ for some $p'$. It is easy to see that $a.\mathtt{0}\curle_S p$. Since $a.\mathtt{0} \models \varphi_1$, Proposition~\ref{logical_characterizations} yields that $p\models\varphi_1$, proving our claim and the primality of $\langle a\rangle\true$. On the other hand, the formula $\langle a\rangle\true\vee\langle b\rangle\true$ is not prime in $\mathL_S$. Indeed, $\langle a\rangle\true\vee\langle b\rangle\true\models \langle a\rangle\true\vee\langle b\rangle\true$, but neither $\langle a\rangle\true\vee\langle b\rangle\true \models\langle a\rangle\true$  nor $\langle a\rangle\true\vee\langle b\rangle\true \models\langle b\rangle\true$ hold. 
\end{example}

Characteristic formulae are defined next, with two distinct definitions: within logic \mathL, and another modulo an equivalence relation.

\begin{definition}[\cite{AILS07,GrafS86a,SteffenI94}]\label{def:characteristic}
Let $\mathL\subseteq \mathL_{BS}$. 
A formula  $\varphi\in\mathL$ is \emph{characteristic  for $p\in \proc$ within $\mathL$} iff, for all $q \in \proc$, it holds that $q \models \varphi\Leftrightarrow\mathL(p) \subseteq \mathL(q)$. 
\end{definition}

\begin{proposition}[\cite{AcetoMFI19}]\label{prop:charact-via-primality}
For every $n\geq 1$, $\varphi\in \mathL_{nS}$ is characteristic for some process within $\mathL_{nS}$ iff $\varphi$ is satisfiable and prime in $\mathL_{nS}$.
\end{proposition}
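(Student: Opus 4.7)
The proof splits into two implications: the ``only if'' direction is essentially a direct consequence of the definitions combined with Proposition~\ref{logical_characterizations}, while the ``if'' direction requires reducing $\varphi$ to a finite disjunction of characteristic formulae and then exploiting primality.

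For the forward direction, the plan is to assume that $\varphi$ is characteristic for some $p$ within $\mathL_{nS}$, which, by Proposition~\ref{logical_characterizations}, unfolds to the statement that $q\models\varphi$ iff $p\curle_{nS} q$ for every process $q$. Reflexivity of $\curle_{nS}$ immediately gives $p\models\varphi$, proving satisfiability. For primality, I would suppose $\varphi\models\varphi_1\vee\varphi_2$ with $\varphi_1,\varphi_2\in\mathL_{nS}$; since $p\models\varphi$, without loss of generality $p\models\varphi_1$. Then for any $q\models\varphi$ we have $p\curle_{nS} q$, so by Proposition~\ref{logical_characterizations} every formula of $\mathL_{nS}$ satisfied by $p$ is satisfied by $q$, in particular $\varphi_1$. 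Hence $\varphi\models\varphi_1$.

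For the backward direction, I would assume $\varphi$ satisfiable and prime and appeal to the classical fact that every finite, loop-free process $r$ admits a characteristic formula $\chi_{nS}(r)\in\mathL_{nS}$ with $q\models\chi_{nS}(r)\iff r\curle_{nS} q$ (whose existence for this family of preorders is standard, cf.\ \cite{AcetoMFI19,SteffenI94}). A first easy observation is that whenever $r\models\varphi$, then $\chi_{nS}(r)\models\varphi$: any model $q$ of $\chi_{nS}(r)$ satisfies $r\curle_{nS} q$, so by Proposition~\ref{logical_characterizations} inherits $\varphi$ from $r$. The heart of the argument is then to exhibit finitely many processes $p_1,\ldots,p_k$, all satisfying $\varphi$, such that
\[
\varphi \ \equiv\ \chi_{nS}(p_1)\vee\cdots\vee\chi_{nS}(p_k).
\]
Once this equivalence is in hand, a routine induction on $k$ using the primality of $\varphi$ yields $\varphi\models\chi_{nS}(p_i)$ for some $i$; combining with the reverse entailment $\chi_{nS}(p_i)\models\varphi$ established above gives $\varphi\equiv\chi_{nS}(p_i)$, i.e.\ $\varphi$ is characteristic for $p_i$.

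The main obstacle is precisely the reduction of $\varphi$ to a \emph{finite} disjunction of characteristic formulae. The natural strategy is to use that $\varphi$ has finite modal depth $d=\md(\varphi)$, and that, over a finite action alphabet, there are only finitely many $n$-nested-simulation equivalence classes of finite processes whose behaviour relevant to $\mathL_{nS}$-formulae of depth at most $d$ differs; pick one representative $p_i$ per such class that contains a model of $\varphi$, and argue that any $q\models\varphi$ satisfies $\chi_{nS}(p_i)$ for some $i$. The subtlety here is that the second clause of $\curle_{nS}$ refers to $(n-1)$-nested simulation at arbitrary depth, so a naive truncation of $q$ at depth $d$ need not preserve membership in the same $\equiv_{nS}$-class. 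One must therefore work with an appropriately refined notion of ``shape up to depth related to $d$ and $n$'', and check that $\mathL_{nS}$-formulae of depth $\leq d$ do not distinguish processes sharing this shape; this is the step that requires the most care and where the finiteness of the disjunction is actually extracted.
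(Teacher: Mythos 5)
The paper does not actually prove this proposition: it is imported verbatim from~\cite{AcetoMFI19} (see also the remark following it), so there is no in-paper argument to match your proposal against; I am judging it on its own terms. Your forward direction is correct and is the standard argument: reflexivity of $\curle_{nS}$ gives satisfiability, and for primality, if $p\models\varphi_1$ then every model $q$ of $\varphi$ satisfies $\mathL_{nS}(p)\subseteq\mathL_{nS}(q)$ and hence $\varphi_1$.

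The backward direction, however, has a genuine gap, and it is not merely the truncation subtlety you flag. Your plan is to first establish, for an arbitrary satisfiable $\varphi\in\mathL_{nS}$, an equivalence $\varphi\equiv\chi_{nS}(p_1)\vee\cdots\vee\chi_{nS}(p_k)$ with each $p_i\models\varphi$, and only then invoke primality to select a disjunct. That intermediate claim is \emph{false} for every $n\geq 2$. Take $\varphi=\true$: if $\true\equiv\bigvee_{i=1}^{k}\chi_{nS}(p_i)$, then every process $q$ satisfies some $\chi_{nS}(p_i)$, i.e.\ $p_i\curle_{nS}q$, and clause (ii) of Definition~\ref{Def:beh-preorders}(b) forces $q\curle_{(n-1)S}p_i$, hence $q\curle_{S}p_i$ and $\depth(q)\leq\max_i\depth(p_i)$ --- contradicted by $q=a^{D+1}.\mathtt{0}$. (This does not contradict the proposition, since $\true$ is not prime in $\mathL_{nS}$ for $n\geq 2$: $\true\models\langle a\rangle\true\vee\neg\langle a\rangle\true$.) So the finite-disjunction decomposition cannot be proved for all satisfiable formulae and then combined with primality; for a formula already known to be prime, the decomposition with $k=1$ \emph{is} the conclusion, so restricting it to primes makes the plan circular. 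The structural-induction decomposition you have in mind genuinely works only for $n=1$ (where $\true\equiv\chi_S(\mathtt{0})$ and $\chi_S(p)\wedge\chi_S(q)\equiv\chi_S(p+q)$); it breaks at the contravariant clause for $n\geq 2$. A correct proof must use primality earlier and differently --- e.g.\ via the existence of lower bounds of pairs of models (the $\grcd_{\curle_{2S}}$/game machinery this paper builds for exactly this purpose), or via the weaker decomposition conditions of~\cite{AcetoMFI19} tailored to logics with negation. Relatedly, your finiteness mechanism (representatives of classes agreeing on formulae of depth $\leq d$) only yields the depth-$d$ approximant of $\curle_{nS}$, never $p_i\curle_{nS}q$ itself, so it could not certify $q\models\chi_{nS}(p_i)$ even where the decomposition does exist.
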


\begin{remark}
We note, in passing, that the article~\cite{AcetoMFI19} does not deal explicitly with $nS$, $n\geq 3$. However, its results apply to all the $n$-nested simulation preorders.
\end{remark}


\begin{definition}\label{def:characteristic-equivalence}
Let $X\in\{BS\} \cup \{nS\mid n\geq 1\}$. 
A formula $\varphi\in \mathL_X$ is characteristic for $p\in \proc$ modulo $\equiv_X$  iff for all $q\in \proc$, it holds that $q\models \varphi\Leftrightarrow \mathL_X(p)=\mathL_X(q)$.
\end{definition} 

\begin{proposition}\label{prop:char-mod-equiv}
   Let $X\in\{BS\} \cup \{nS\mid n\geq 1\}$. 
   A formula $\varphi\in \mathL_X$ is characteristic for a process modulo $\equiv_X$ iff $\varphi$ is satisfiable and for every $p,q\in\proc$ such that $p\models\varphi$ and $q\models\varphi$, $p\equiv_X q$ holds.
\end{proposition}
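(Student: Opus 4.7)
The plan is to unpack Definition~\ref{def:characteristic-equivalence} via Proposition~\ref{prop:hmt} and then verify both implications by elementary reasoning with $\equiv_X$. First I observe that, by Proposition~\ref{prop:hmt}, the condition $\mathL_X(p)=\mathL_X(q)$ is equivalent to $p\equiv_X q$: for $X=BS$ this is immediate, and for $X=nS$ it follows because $\curle_{nS}$ is characterized by containment of $\mathL_{nS}$-theories, so equality of theories corresponds to the two-sided preorder, i.e., to $\equiv_{nS}$. Thus $\varphi$ is characteristic for $p$ modulo $\equiv_X$ iff for every $q\in\proc$, $q\models\varphi$ iff $p\equiv_X q$. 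I will also use the standard fact (again a consequence of Proposition~\ref{prop:hmt}) that formulae of $\mathL_X$ are invariant under $\equiv_X$: if $p\equiv_X q$ and $\varphi\in\mathL_X$, then $p\models\varphi\Leftrightarrow q\models\varphi$.

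For the forward direction, assume $\varphi\in\mathL_X$ is characteristic for some $p\in\proc$ modulo $\equiv_X$. Applying the reformulated definition with $q=p$ (and $\equiv_X$ reflexive) gives $p\models\varphi$, so $\varphi$ is satisfiable. If $p', q'\models\varphi$, the characteristicity of $\varphi$ for $p$ yields $p\equiv_X p'$ and $p\equiv_X q'$, whence $p'\equiv_X q'$ by transitivity and symmetry of $\equiv_X$. For the backward direction, assume $\varphi$ is satisfiable and all its models are pairwise $\equiv_X$-related; pick any $p\models\varphi$ and I will show $\varphi$ is characteristic for $p$. Take an arbitrary $q\in\proc$. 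If $q\models\varphi$, then by assumption $p\equiv_X q$, i.e., $\mathL_X(p)=\mathL_X(q)$. Conversely, if $\mathL_X(p)=\mathL_X(q)$, i.e., $p\equiv_X q$, then the invariance of $\mathL_X$ under $\equiv_X$ (applied to $\varphi\in\mathL_X$) together with $p\models\varphi$ gives $q\models\varphi$. This establishes both directions.

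No step presents a real obstacle; the only point that requires care is the initial translation between $\mathL_X(p)=\mathL_X(q)$ and $p\equiv_X q$, which must handle $X=BS$ and $X=nS$ uniformly via Proposition~\ref{prop:hmt}. Everything else is a direct manipulation of the definitions using reflexivity, symmetry and transitivity of $\equiv_X$ and the $\equiv_X$-invariance of $\mathL_X$.
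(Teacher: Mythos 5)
Your proof is correct and is exactly the routine unpacking of Definition~\ref{def:characteristic-equivalence} via Proposition~\ref{prop:hmt} that the paper has in mind; the paper in fact states this proposition without any proof, treating it as immediate. The one point needing care, which you handle properly, is the translation between $\mathL_X(p)=\mathL_X(q)$ and $p\equiv_X q$ (equality of theories versus the two-sided preorder for $X=nS$, and the Hennessy--Milner theorem for $X=BS$).
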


\paragraph{The \hml tableau}
Let $S$ be a set of formulae. We write
$\bigwedge S$ for $\bigwedge_{\varphi\in S}\varphi$, when $S$ is finite, and
 $\sub(S)$ for $\{\varphi ~ | ~\varphi\in\sub(\psi) \text{ for some } \psi\in S\}$. Note that $\sub(S)$ is finite, when so is $S$.

\begin{definition}\label{def:proptableau}
Let $T$ be a set of \hml formulae. 
\begin{enumerate}[(a)]
    \item $T$ is \emph{propositionally inconsistent} if 
    $\ff\in T$, or $\psi\in T$ and $\neg\psi\in T$ for some formula $\psi$. Otherwise, $T$ is \emph{propositionally consistent}.
    \item  $T$ is a \emph{propositional tableau} if the following conditions are met:
    \begin{enumerate} [(i)]
        \item if $\psi\wedge\psi'\in T$, then $\psi,\psi'\in T$,
        \item if $\psi\vee\psi'\in T$, then either $\psi\in T$ or $\psi'\in T$, and
        \item $T$ is propositionally consistent. 
    \end{enumerate}
\end{enumerate}
\end{definition}

\begin{definition}\label{def:hmltableau}
Let $\act=\{a_1,\dots,a_k\}$. An \hml tableau is a tuple $T=(S,L,R_{a_1},\dots,R_{a_k})$, where $S$ is a set of states, $L$ is a labelling function that maps every $s\in S$ to a set $L(s)$ of formulae, and  $R_{a_i}\subseteq S\times S$, for every $1\leq i\leq k$, such that
\begin{enumerate}[(i)]
    \item $L(s)$ is a propositional tableau for every $s\in S$,
    \item if $[a_i] \psi\in L(s)$ and $(s,t)\in R_{a_i}$, then $\psi\in L(t)$, and
    \item if $\langle a_i\rangle\psi\in L(s)$, then there is some $t$ such that $(s,t)\in R_{a_i}$ and $\psi\in L(t)$.
\end{enumerate}
\end{definition}

\noindent An \hml tableau for $\varphi$ is an \hml tableau such that $\varphi\in L(s)$ for some $s\in S$.


\begin{proposition}[\cite{HalpernM92}]\label{tableausat}
An \hml formula $\varphi$ is satisfiable iff there is an \hml tableau for $\varphi$.
\end{proposition}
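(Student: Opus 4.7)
The plan is to prove the two directions separately, following the classical soundness/completeness pattern for modal tableaux.

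For the $(\Rightarrow)$ direction, I would start from a process $p$ in some LTS $\mathS=(\proc,\act,\longrightarrow)$ with $p\models\varphi$ and read off a tableau directly from $\mathS$ restricted to $\reach(p)$. Take $S=\reach(p)$, let each $R_{a_i}$ be the restriction of $\myarrowai$ to $\reach(p)\times\reach(p)$, and for each $q\in S$ let $L(q)$ collect every subformula $\psi$ of $\varphi$ (and every negation of a subformula) that $q$ actually satisfies. Propositional consistency of each $L(q)$ is immediate, because no process satisfies both $\psi$ and $\neg\psi$, and none satisfies $\ff$. Conditions (i)--(iii) of Definition~\ref{def:proptableau} together with conditions (i)--(iii) of Definition~\ref{def:hmltableau} then follow directly from the semantic clauses: $q\models\psi_1\vee\psi_2$ gives at least one disjunct which we put into $L(q)$; $q\models\langle a\rangle\psi'$ furnishes an $a$-successor at which $\psi'$ holds; $q\models[a]\psi'$ forces $\psi'$ at every $a$-successor; conjunctions split. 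Since $p\models\varphi$ yields $\varphi\in L(p)$, this is an \hml{} tableau for $\varphi$.

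For the $(\Leftarrow)$ direction, given an \hml{} tableau $T=(S,L,R_{a_1},\ldots,R_{a_k})$ with $\varphi\in L(s_0)$ for some $s_0\in S$, I would reinterpret $T$ as an LTS by setting $\proc:=S$ and declaring $s\myarrowai t$ iff $(s,t)\in R_{a_i}$. The heart of the argument is the \emph{truth lemma}: for every $\psi\in\mathL_{BS}$ and every $s\in S$, if $\psi\in L(s)$ then $s\models\psi$. This is proved by structural induction on $\psi$: $\psi=\true$ is trivial; $\psi=\ff$ is ruled out by propositional consistency; the Boolean cases use conditions (i) and (ii) of Definition~\ref{def:proptableau} and the induction hypothesis; $\psi=\langle a\rangle\psi'$ uses condition (iii) of Definition~\ref{def:hmltableau} to supply a successor at which $\psi'$ sits in the label; and $\psi=[a]\psi'$ uses condition (ii) of Definition~\ref{def:hmltableau}, since the $a$-successors of $s$ in the constructed LTS are exactly its $R_{a}$-successors. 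Applied at $s_0$, the lemma gives $s_0\models\varphi$, so $\varphi$ is satisfiable.

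The main obstacle is the treatment of negation in the truth lemma, because Definition~\ref{def:hmltableau} provides no decomposition rule for $\neg$ applied to a compound formula. The cleanest remedy is to observe that it suffices to handle $\varphi$ in negation normal form: by the de Morgan laws together with the modal dualities $\neg\langle a\rangle\psi\equiv[a]\neg\psi$ and $\neg[a]\psi\equiv\langle a\rangle\neg\psi$, every \hml{} formula is logically equivalent to one in NNF, and since \hml{} has no atoms other than $\true$ and $\ff$, negations disappear entirely from an NNF formula using $\neg\true\equiv\ff$ and $\neg\ff\equiv\true$. As logical equivalence preserves satisfiability, both directions can be argued for the NNF version of $\varphi$, at which point the induction above runs through with no negation case to worry about.
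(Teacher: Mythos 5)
Your proof is correct and follows the standard Halpern--Moses argument that the paper itself defers to via the citation of \cite{HalpernM92}: the paper offers no proof of its own beyond Remark~\ref{rem:tableausat}, whose LTS-from-tableau construction is exactly your backward direction, and your forward direction (labelling each reachable state with the subformulae it satisfies) is the classical counterpart. Your NNF remedy for the negation case is consistent with the paper's own convention (in its appendix) that $\neg$ is a defined operation pushed inward by De Morgan and the modal dualities, so nothing further is needed.
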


\begin{remark}\label{rem:tableausat}
    The proof of the ``right-to-left'' direction of Proposition~\ref{tableausat} constructs an LTS corresponding to a process satisfying $\varphi$ from an \hml tableau for $\varphi$ in a straightforward way: given an \hml tableau $T=(S,L,R_{a_1},\dots,R_{a_k})$ for $\varphi$, define the LTS $\mathS=(P,\myarrowasubone,\dots,\myarrowasubk)$, where $P=S$ and every $\myarrowasubi$ coincides with $R_{a_i}$. Note that $T$ and $\mathS$ have the same size and depth.
\end{remark}
%
%
%
%
\paragraph{Complexity and games}
In what follows, we 
shall use the following results from~\cite{AcetoACI25}.

\begin{proposition}[\cite{AcetoACI25}]\label{prop:sat}
 Let $|\act|>1$.
  \begin{enumerate}[(a)]
      \item Satisfiability of formulae in $\mathL_{S}$ is in \cP.
      \item Satisfiability of formulae in $\mathL_{2S}$ is \NP-complete.
      \item Satisfiability of formulae in $\mathL_{3S}$ is \pspace-complete.
  \end{enumerate}  
\end{proposition}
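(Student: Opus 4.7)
The three parts require progressively richer techniques. For (a), I would exploit that $\mathL_S$ is purely positive, so the only way a formula can be unsatisfiable is through $\ff$ appearing in a conjunctive context. Define $\mathrm{sat}(\psi)$ by recursion over $\sub(\varphi)$: $\mathrm{sat}(\true) = \top$, $\mathrm{sat}(\ff) = \bot$, the natural clauses for $\wedge$ and $\vee$, and crucially $\mathrm{sat}(\langle a\rangle\psi) = \mathrm{sat}(\psi)$, because whenever $p \models \psi$ we have $a.p \models \langle a\rangle\psi$, so any satisfiable body yields a satisfiable diamond formula. A bottom-up pass over the subformulae runs in linear time and correctness is immediate by induction on $\psi$.

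For (b), the \NP upper bound rests on a small-model property for $\mathL_{2S}$: any satisfiable $\varphi$ admits a witness of polynomial size in $|\varphi|$. I would prove this using the \hml tableau machinery of Definition~\ref{def:hmltableau}, noting that negations in $\mathL_{2S}$ only enclose $\mathL_S$ subformulae, whose truth at a state depends only on the local outgoing structure via part (a). One can therefore construct a tableau whose states are labelled by propositional tableaux over $\sub(\varphi)$, with one successor per positive diamond obligation; sharing states that carry identical labels keeps the overall size polynomial. Guessing such a tableau and verifying each $\neg\mathL_S$ constraint at every state with the procedure from (a) yields an \NP algorithm. For \NP-hardness, I would reduce from \textsc{SAT}: encode the truth value of each propositional variable $x_i$ as the presence or absence of an $a_i$-successor (using one of the at least two actions available in \act), and express clauses via disjunctions of $\langle a_i\rangle\true$ and $\neg\langle a_i\rangle\true$ formulae.

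For (c), the \pspace upper bound follows from an alternating polynomial-time recursion on the formula structure. The algorithm processes $\varphi \in \mathL_{3S}$ by existentially choosing disjuncts and guessing successors whose subsequent behaviour is then verified for $\langle a\rangle$-formulae, and by universally branching when it encounters $\neg\psi$ with $\psi \in \mathL_{2S}$, recursing to check that no process satisfies $\psi$. Because each negation strictly reduces the negation depth and the innermost $\mathL_S$ layer is handled in polynomial time via (a), the alternating recursion has polynomially bounded depth with polynomial work per node, placing the problem in $\ap = \pspace$. \pspace-hardness is established by a reduction from \textsc{QBF}: the two available layers of negation in $\mathL_{3S}$ supply the alternation needed to encode an arbitrary polynomial-length quantifier prefix, with modalities over the two actions encoding the truth assignments to quantified variables, following the pattern of classical hardness proofs for modal satisfiability.

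The main obstacle is the \pspace upper bound in (c): although the alternating recursion is the right template, organising it so that the intermediate data carried across switches between existential and universal phases stays of polynomial size requires care, since naively materialising the model being witnessed would be exponential. The small-model argument in (b) is the other delicate point, where genuine sharing of tableau states with identical labels is needed to prevent the tree-shaped witness from blowing up exponentially with the modal depth of $\varphi$.
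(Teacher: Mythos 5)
First, note that the paper does not prove this proposition at all: it is imported verbatim from~\cite{AcetoACI25}, so there is no in-paper proof to compare against. Judged on its own terms, your part (a) is correct (the clause $\mathrm{sat}(\psi_1\wedge\psi_2)=\mathrm{sat}(\psi_1)\wedge\mathrm{sat}(\psi_2)$ is sound because $\mathL_S$ formulae are preserved under $\curle_S$ and $p_i\curle_S p_1+p_2$), but each of (b) and (c) has a concrete gap. In (b), the step ``sharing states that carry identical labels keeps the overall size polynomial'' does not do what you need: it bounds the number of states only by the number of distinct labels, which is $2^{|\sub(\varphi)|}$. The actual reason a satisfiable $\varphi\in\mathL_{2S}$ has a linear-size model is structural: every box in $\mathL_{2S}$ arises as $\neg\langle a\rangle\chi$ with $\chi\in\mathL_S$, so box bodies contain no diamonds; hence each tableau state's diamond obligations stem from a single subformula occurrence and the total number of states is $O(|\varphi|)$ (this is exactly what the paper's Algorithm $\mathtt{ConPro}$ and Lemma~\ref{lem:model-of-phi-size} exploit). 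Your \NP-hardness reduction also uses one action $a_i$ per propositional variable, whereas the statement fixes $|\act|>1$ (two actions suffice and must suffice); you would need to encode variables by positions along traces over a two-letter alphabet rather than by distinct action labels.

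In (c), the description of the universal branches is wrong as stated: on encountering $\neg\psi$ with $\psi\in\mathL_{2S}$ you propose ``recursing to check that no process satisfies $\psi$,'' i.e.\ checking unsatisfiability of $\psi$, but $p\models\neg\psi$ only requires that the particular witness $p$ fail $\psi$. Your procedure would reject the satisfiable formula $\langle a\rangle\true\wedge\neg\langle b\rangle\true$. The repair is to push the negation into negation normal form and handle the resulting boxes by universal branching over the successors of the current tableau state (or simply to observe $\mathL_{3S}\subseteq\hml$ and invoke the standard \pspace tableau for \hml). Finally, the \pspace-hardness sketch cannot ``follow the pattern of classical hardness proofs'': in negation normal form every branch of an $\mathL_{3S}$ formula carries modalities in the shape $\Diamond^*\Box^*\Diamond^*$, i.e.\ at most three alternation blocks, so the Ladner-style encoding of an arbitrary quantifier prefix as an unboundedly alternating $\Diamond/\Box$ sequence is simply not expressible here; a genuinely different encoding is required, and your proposal does not supply one.
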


We refer to the problem of determining whether a formula $\varphi\in\mathL$ is prime in $\mathL$ as the  \emph{Formula Primality Problem for $\mathL$}.
Hardness results for this problem follow.

\begin{proposition}[\cite{AcetoACI25}]\label{prop:primality}
    Let $|\act|>1$.
    \begin{enumerate}[(a)]
    \item The Formula Primality Problem for $\mathL_{2S}$ is \conp-hard.
    \item The Formula Primality Problem for $\mathL_{3S}$ is \pspace-hard.
    \end{enumerate}
\end{proposition}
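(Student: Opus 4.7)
\begin{proofsketch}
The plan is to prove both hardness results by polynomial-time reductions from canonical hard problems. Since every unsatisfiable formula is vacuously prime, the delicate direction in each reduction is to arrange that the constructed formula is satisfiable but \emph{non}-prime exactly when the source instance is a ``yes'' instance.

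For part (a), I would reduce from propositional unsatisfiability, which is \conp-complete. Given $\beta(x_1,\dots,x_n)$ I would construct a formula $\Phi_\beta\in\mathL_{2S}$ whose satisfying processes are (up to $\equiv_{2S}$) a fixed ``base'' process $p_\bot$ together with, for each satisfying assignment $\alpha$ of $\beta$, an ``assignment'' process $p_\alpha$. Fixing two distinguished actions of $\act$, each $p_\alpha$ would be shaped so as to be $\curle_{2S}$-incomparable with $p_\bot$. The formula $\Phi_\beta$ would combine a modal skeleton enforcing this shape with a propositional part encoding $\beta$. If $\beta$ is unsatisfiable, $p_\bot$ remains the unique satisfying process up to $\equiv_{2S}$ and $\Phi_\beta$ is characteristic, hence prime by Proposition~\ref{prop:charact-via-primality}. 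If some $\alpha$ satisfies $\beta$, then the two incomparable models $p_\bot$ and $p_\alpha$ are separated by a negated $\mathL_S$-formula, yielding a disjunctive witness to non-primality; the availability of negations of $\mathL_S$-formulas in $\mathL_{2S}$ is essential for this step.

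For part (b), I would reduce from the satisfiability problem for $\mathL_{3S}$, which is \pspace-hard by Proposition~\ref{prop:sat}(c). Given $\psi\in\mathL_{3S}$, I would construct $\Phi_\psi$ of the form $\chi_\bot\vee\langle c\rangle\psi'$, where $\chi_\bot$ is a characteristic formula within $\mathL_{3S}$ of an anchor process $p_\bot$ that does not enable action $c$, and $\psi'$ is a polynomial-size transformation of $\psi$ guaranteeing that any model of $\langle c\rangle\psi'$ is $\curle_{3S}$-incomparable with $p_\bot$. Then $\Phi_\psi$ is prime iff $\psi'$ is unsatisfiable, producing the required reduction from a \pspace-hard problem. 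A similar recipe applied within $\mathL_{2S}$ would also give an alternative route to part (a) by reducing from validity of propositional formulas.

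The principal obstacle in both parts is verifying the $\curle_{nS}$-incomparability of the ``alternative'' models with the anchor. Because $\curle_{(n+1)S}\subsetneq\curle_{nS}$ is sensitive to subtle modal-depth and negation patterns, I would need to exhibit distinguishing formulae in $\mathL_{nS}$ of the right shape and polynomial size, and to check that no formula in $\mathL_{nS}$ inadvertently identifies two processes intended to be incomparable. This is particularly delicate for part (b), since the richer syntax of $\mathL_{3S}$ provides more potential separators, so the gadgets used in the encoding must be designed to resist separation even under this stronger logic.
\end{proofsketch}
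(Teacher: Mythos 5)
First, a point of comparison: the paper does not prove this proposition at all---it is imported from~\cite{AcetoACI25} as a known lower bound, so there is no in-paper argument to measure your reductions against. Judged on its own terms, your sketch for part (b) is essentially sound: taking $\Phi_\psi=\chi_\bot\vee\langle c\rangle\psi$, where $\chi_\bot$ is a characteristic formula within $\mathL_{3S}$ for an anchor $p_\bot$ that does not enable $c$, already does the job (no further transformation $\psi'$ is needed). Any common $\curle_{3S}$-lower bound of $p_\bot$ and of a model of $\langle c\rangle\psi$ that itself satisfies $\Phi_\psi$ would have to satisfy $\chi_\bot$, forcing $p_\bot\curle_{3S}q$ for a $c$-enabled $q$, which is impossible; hence $\Phi_\psi$ is prime iff $\psi$ is unsatisfiable, and unsatisfiability for $\mathL_{3S}$ is \pspace-hard by Proposition~\ref{prop:sat}(c) and closure of \pspace under complement.

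The genuine gap is in part (a), in the step ``$p_\bot$ remains the unique satisfying process up to $\equiv_{2S}$ and $\Phi_\beta$ is characteristic, hence prime by Proposition~\ref{prop:charact-via-primality}''. Having a single model up to $\equiv_{2S}$ is the notion of Definition~\ref{def:characteristic-equivalence} (characteristic \emph{modulo} $\equiv_{2S}$), not that of Definition~\ref{def:characteristic} (characteristic \emph{within} $\mathL_{2S}$), and only the latter is equivalent to satisfiability plus primality. A formula whose models form exactly one $\equiv_{2S}$-class need not be prime when the $\curle_{2S}$-upward closure of that class is strictly larger, so the inference does not follow as stated; you would have to arrange that in the unsatisfiable case the models of $\Phi_\beta$ are the full $\curle_{2S}$-upward closure of $p_\bot$. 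The cleanest repair is the one you mention only in passing at the end of part (b): run the same disjunctive recipe inside $\mathL_{2S}$, i.e.\ take $\chi_\bot\vee\langle c\rangle\,\mathrm{enc}(\beta)$ where $\mathrm{enc}(\beta)\in\mathL_{2S}$ is satisfiable iff $\beta$ is (such encodings exist, since satisfiability for $\mathL_{2S}$ is \NP-hard by Proposition~\ref{prop:sat}(b)); this yields \conp-hardness directly and sidesteps the characteristic-versus-unique-model confusion.
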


The following corollary follows from the results of~\cite{AcetoACI25} presented above.

\begin{corollary}\label{cor:n-sat}
     Let $|\act|>1$ and $n\geq 3$. Satisfiability of formulae in $\mathL_{nS}$ is \pspace-complete, and the Formula Primality Problem for $\mathL_{nS}$ is \pspace-hard.
\end{corollary}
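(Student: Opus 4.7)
The corollary breaks into three claims: $\mathL_{nS}$ satisfiability is in \pspace, is \pspace-hard, and $\mathL_{nS}$ primality is \pspace-hard. For the \pspace upper bound on satisfiability, my plan is to use the syntactic inclusion $\mathL_{nS}\subseteq\mathL_{BS}=\hml$, which follows by a routine induction on formula structure from Definition~\ref{def:mathlx}, together with the classical \pspace algorithm for \hml satisfiability---essentially, guess an \hml tableau of polynomial modal depth (existence guaranteed by Proposition~\ref{tableausat}, see also Remark~\ref{rem:tableausat}) and verify its conditions by an alternating polynomial-time procedure.

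For the two \pspace-hardness statements, the plan is to exploit the dual inclusion $\mathL_{3S}\subseteq\mathL_{nS}$, again immediate from Definition~\ref{def:mathlx} by induction on formula structure. For satisfiability, the identity map is then a correct logspace reduction from the \pspace-hard problem of Proposition~\ref{prop:sat}(c), since satisfiability is a semantic property that does not depend on the ambient logic.

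The primality lower bound is more delicate. The natural attempt is again an identity reduction from $\mathL_{3S}$ primality (\pspace-hard by Proposition~\ref{prop:primality}(b)), but primality is logic-sensitive: a formula $\varphi\in\mathL_{3S}$ that is prime in $\mathL_{3S}$ may fail to be prime in $\mathL_{nS}$ because the larger logic permits more witness disjunctions. This failure appears already at low levels: $\langle a\rangle\true$ is prime in $\mathL_S$ by Example~\ref{ex:prime}, yet it is not prime in $\mathL_{2S}$, as witnessed by $\varphi_1=\langle a\rangle\true\wedge\neg\langle b\rangle\true$ and $\varphi_2=\langle a\rangle\true\wedge\langle b\rangle\true$, for which $\varphi_1\vee\varphi_2\equiv\langle a\rangle\true$ while $a.\mathtt{0}+b.\mathtt{0}$ refutes $\langle a\rangle\true\models\varphi_1$ and $a.\mathtt{0}$ refutes $\langle a\rangle\true\models\varphi_2$. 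The main obstacle is therefore to show that the identity reduction is nevertheless correct on the particular formulae produced by the hardness construction of~\cite{AcetoACI25}. My plan is to inspect that construction and verify that in each YES-instance its output formula has all its models pairwise $\equiv_{nS}$-equivalent for every $n\geq 3$---bisimilarity on finite processes would suffice. Then Proposition~\ref{prop:charact-via-primality} forces primality in $\mathL_{3S}$ to coincide with primality in $\mathL_{nS}$ on those formulae, so the identity map becomes a valid logspace reduction and the desired \pspace-hardness follows.
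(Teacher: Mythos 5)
Your decomposition is the right one, and for the two satisfiability claims the argument is complete and matches what the paper leaves implicit: the upper bound comes from $\mathL_{nS}\subseteq\mathL_{BS}$ together with the classical \pspace decision procedure for \hml (note only that one cannot literally ``guess a tableau'', since tableaux may be exponentially large; the standard depth-first/alternating formulation you allude to is what actually works), and the lower bound is the identity reduction from Proposition~\ref{prop:sat}(c), which is sound because satisfiability is a property of the formula alone. Your treatment of the primality lower bound is, if anything, more careful than the paper, which simply asserts that the corollary ``follows from the results of~\cite{AcetoACI25}'': you correctly observe that primality is relative to the ambient logic and that $\mathL_{3S}\subseteq\mathL_{nS}$ therefore does \emph{not} by itself license the identity reduction, and your counterexample $\langle a\rangle\true$ (prime in $\mathL_S$, not prime in $\mathL_{2S}$) is valid. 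Your repair---showing that on the image of the hardness reduction, YES-instances yield formulae all of whose models are pairwise equivalent, hence characteristic and thus prime in \emph{every} $\mathL_{nS}$ by Proposition~\ref{prop:charact-via-primality}, while NO-instances yield $\mathL_{3S}$-witnesses to non-primality that remain witnesses in the larger logic $\mathL_{nS}$---is exactly the argument needed.

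The one genuine gap is that this repair is stated as a plan contingent on inspecting the construction of~\cite{AcetoACI25}, and that inspection is not carried out. Without it, the claim that YES-instances produce formulae with pairwise bisimilar (or pairwise $\equiv_{nS}$-equivalent) models is unverified, and the whole primality lower bound hangs on it. The property you need is plausible and is consistent with the fact that Proposition~\ref{prop:char-ns-complexity}(b) already asserts \pspace-hardness of deciding characteristic formulae within $\mathL_{nS}$ for all $n\geq 3$ (which strongly suggests the cited reduction is designed to produce characteristic formulae on YES-instances), but to make your proof self-contained you must either quote the relevant lemma from~\cite{AcetoACI25} or re-derive the property of the construction explicitly.
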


We introduce 
two complexity classes that play an important role in pinpointing the complexity of deciding characteristic formulae within $\mathL_{2S}$. The first class is $\dpc=\{L_1\cap L_2 \mid L_1\in\NP \text{ and } L_2\in \conp\}$~\cite{PapadimitriouY84} and the second one is  \us~\cite{BlassG82}, which 
is defined thus:
A language $L\in\us$ iff there is a non-deterministic Turing machine $T$ such that, for every instance $x$ of $L$, $x\in L$ iff $T$ has \emph{exactly one} accepting path on input $x$. The problem $\textsc{UniqueSat}$, viz.~the problem of deciding whether a given Boolean formula has exactly one satisfying truth assignment, is \us-complete. Note that $\us\subseteq\dpc$~\cite{BlassG82}.

\begin{proposition}[\cite{AcetoACI25}]\label{prop:char-ns-complexity}
\begin{enumerate}[(a)]
    \item Let $|\act|>1$ and $\varphi\in\mathL_{2S}$. Deciding whether $\varphi$ is characteristic for a process within $\mathL_{2S}$ (respectively, modulo $\equiv_{2S}$) is \us-hard (respectively, \conp-hard).
    \item Let $|\act|>1$ and $\varphi\in\mathL_{nS}$, where $n\geq 3$. Deciding whether $\varphi$ is characteristic for a process within $\mathL_{nS}$ (or modulo $\equiv_{nS}$) is \pspace-hard.
\end{enumerate}
\end{proposition}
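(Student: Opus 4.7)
My plan is to prove each claim by polynomial-time reduction from a known hard problem, exploiting the two equivalences: a formula is characteristic within $\mathL_X$ iff it is satisfiable and prime (Proposition~\ref{prop:charact-via-primality}), and it is characteristic modulo $\equiv_X$ iff it is satisfiable and all its models are $\equiv_X$-equivalent (Proposition~\ref{prop:char-mod-equiv}). So each reduction must control \emph{both} the satisfiability of the target formula and the uniqueness of its model-type simultaneously.

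For part~(a), to show \us-hardness within $\mathL_{2S}$, I would reduce from \textsc{UniqueSat}. Given a Boolean formula $\phi(x_1,\dots,x_k)$, I would build $\psi_\phi \in \mathL_{2S}$ whose satisfying processes are exactly the depth-$k$ assignment-encoding trees that make $\phi$ true, and which, crucially, sends different truth assignments to processes that are pairwise inequivalent modulo $\equiv_{2S}$. The single negated simulation subformula allowed in $\mathL_{2S}$ is the tool used to witness these inequivalences. Then $\psi_\phi$ is satisfiable iff $\phi$ has at least one satisfying assignment, and prime iff it has at most one, so $\psi_\phi$ is characteristic in $\mathL_{2S}$ iff $\phi \in \textsc{UniqueSat}$. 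For \conp-hardness modulo $\equiv_{2S}$, I would use a variant of the same construction in which satisfiability is trivially forced by an anchor sub-process, leaving only the ``uniqueness of equivalence class'' as the non-trivial condition; this gives a reduction from the complement of \textsc{Sat}.

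For part~(b), the $n\geq 3$ nested-simulation logics already admit \pspace-complete satisfiability and \pspace-hard primality (Propositions~\ref{prop:sat} and~\ref{prop:primality}, Corollary~\ref{cor:n-sat}). My plan is to upgrade the primality reduction---which already encodes \pspace-style alternation via iterated negation in $\mathL_{nS}$---into a reduction to characteristicity by conjoining the produced formula with a small, explicitly characteristic anchor $\chi$ (characteristic for some fixed simple process). The anchor forces satisfiability and freezes the ``background'' structure of every model, so that primality, respectively uniqueness of equivalence class, for the conjunction mirrors primality for the original instance, and both variants (within $\mathL_{nS}$ and modulo $\equiv_{nS}$) inherit \pspace-hardness.

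I expect the main obstacle to be the uniqueness/equivalence-class half of each reduction, rather than satisfiability: one must ensure that two distinct computational witnesses really yield satisfying processes that are not $\equiv_{nS}$-related, which is delicate because $\equiv_{nS}$ is coarser than bisimilarity, and because forcing such an inequivalence inside $\mathL_{nS}$ costs one level of nested negation per separation. Carefully budgeting those negation levels against the ambient $n$, while keeping the reduction polynomial-sized and the anchor genuinely characteristic, is the core engineering task.
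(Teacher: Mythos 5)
The first thing to note is that the paper does not prove this proposition: it is stated with the citation to~\cite{AcetoACI25} and imported as a known lower bound from that earlier work, so there is no in-paper proof to compare your attempt against. Judged on its own terms, your text is a plan rather than a proof, and the plan has concrete gaps. For part~(a), the entire mathematical content lives in the construction of $\psi_\phi$, which you do not give. To get ``prime iff $\phi$ has at most one satisfying assignment'' you need much more than pairwise $\not\equiv_{2S}$ of the canonical assignment trees: you need that \emph{every} model of $\psi_\phi$ lies above (w.r.t.\ $\curle_{2S}$) the encoding of some satisfying assignment, so that two assignments yield a disjunction of their characteristic formulae that $\psi_\phi$ entails without entailing either disjunct, while a unique assignment makes $\psi_\phi$ itself characteristic. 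Asserting that ``the single negated simulation subformula'' separates assignments does not establish this, and in $\mathL_{2S}$ you have only one level of negation of an $\mathL_S$ formula to spend on all the separations at once, a point you flag but do not resolve.

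For part~(b), the anchor idea is dubious as stated. If $\chi$ is characteristic for $p$ within $\mathL_{nS}$, the models of $\chi\wedge\psi$ are exactly the processes $q$ with $p\curle_{nS}q$ and $q\models\psi$; whether that class has a $\curle_{nS}$-least element (i.e., whether $\chi\wedge\psi$ is characteristic) is not in any evident way equivalent to the primality of $\psi$, and you do not say which \pspace-hard problem the composite reduction starts from, nor why the map is many-one correct in both directions. The standard route here is to reduce directly from validity (or unsatisfiability) of $\mathL_{nS}$ formulae, which is \pspace-hard by Corollary~\ref{cor:n-sat} and closure of \pspace under complement, arranging a formula that is characteristic iff the given formula is valid; and for the ``modulo $\equiv_{nS}$'' variant Proposition~\ref{prop:char-mod-equiv} still requires controlling satisfiability, so the anchor cannot simply make that half ``trivial'' without an argument. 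As it stands, neither reduction is carried far enough to be checkable.
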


An \emph{alternating Turing machine} is a non-deterministic Turing machine whose set of states is partitioned into existential and universal states. An existential state is accepting if at least one of its transitions leads to an accepting state. In contrast, a universal state is accepting only if all its transitions lead to accepting states. The machine as a whole accepts an input if its initial state is accepting. The complexity class \ap is the class of languages accepted by polynomial-time alternating Turing machines. 
An \emph{oracle Turing machine} is a Turing machine that has access to an oracle---a ``black box'' capable of solving a specific computational problem in a single operation. An oracle Turing machine can perform all the usual operations of a Turing machine, and can also query the oracle to obtain a solution to any instance of the computational problem for that oracle. We use $\textsf{C}^{\textsf{C'}[\text{poly}]}$ to denote the complexity class of languages decidable by an algorithm in class $\textsf{C}$ that makes polynomially many oracle calls to a language in $\textsf{C'}$. Note, for example, that $\pspace^{\pspace[\text{poly}]}=\pspace$, since a polynomial-space oracle Turing machine can simulate any \pspace oracle query by solving the problem itself in polynomial space.

\begin{proposition}
\label{prop:pspace-ap}
\begin{enumerate}[(a)]
    \item  \textnormal{(\cite{DBLP:journals/jacm/ChandraKS81})\textbf{.}} $\ap=\pspace$.
    \item $\ap^{\pspace[\text{poly}]}=\pspace$.
\end{enumerate}
\end{proposition}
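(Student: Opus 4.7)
The plan is as follows. Part~(a) is the classical theorem of Chandra, Kozen and Stockmeyer and requires no new argument beyond citing it. For part~(b), I would establish the two inclusions separately.

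The easy direction, $\pspace \subseteq \ap^{\pspace[\text{poly}]}$, is immediate from part~(a): any \pspace language is already decided by a polynomial-time alternating Turing machine, which sits inside $\ap^{\pspace[\text{poly}]}$ by simply never invoking its oracle.

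For the nontrivial direction $\ap^{\pspace[\text{poly}]} \subseteq \pspace$, I would fix an alternating oracle machine $M$ running in polynomial time $p(n)$ and making at most polynomially many queries to some oracle language $L\in\pspace$, together with a fixed polynomial-space algorithm $O$ deciding $L$. On input $x$, the plan is to simulate $M(x)$ in polynomial space by a recursive, depth-first traversal of its computation tree: at an existential (respectively, universal) configuration, recurse to determine whether some (respectively, every) successor leads to acceptance; at a query configuration with query string $y$, suspend the traversal, invoke $O(y)$ in polynomial space, record the one-bit answer, release the workspace used by $O$, and resume the traversal.

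The main step is to verify that the overall space consumption remains polynomial. Each configuration of $M$ has polynomial size, the recursion depth is bounded by $p(|x|)$ so the implicit recursion stack takes only polynomial space, and each oracle invocation uses polynomial work space that can be fully reclaimed afterwards because only a single bit of the oracle's answer needs to be remembered. The step I expect to require the most care---though it is bookkeeping rather than new mathematics---is showing that the recursive alternating simulation and the in-line \pspace oracle evaluations can share a single polynomial workspace without mutual interference; once this is checked, the simulation runs in polynomial space, which yields the desired inclusion.
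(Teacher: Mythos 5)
Your proposal is correct and follows the standard argument the paper relies on: part (a) is cited, and for part (b) the paper's (unspelled-out) reasoning is exactly the depth-first simulation of the alternating machine combined with the observation, made just before the proposition, that a polynomial-space machine can answer each \pspace oracle query in-line by reusing its workspace. Your bookkeeping about recursion depth, configuration size, and reclaiming the oracle's workspace is the right level of care, and nothing is missing.
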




Consider now two-player games that have the following characteristics: they are \emph{zero-sum} (that is, player one's gain is equivalent to player two's loss), \emph{perfect information} (meaning that, at every point in the game, each player is fully aware of all events that have previously occurred), \emph{polynomial-depth} (i.e.~the games proceed for a number of rounds that is polynomial in the input size), and \emph{computationally bounded} (that is, for each round, the computation performed by a player can be simulated by a Turing machine in polynomial time in the input size). For two-player games that have all four characteristics described above, there is a polynomial-time alternating Turing machine that decides whether one of the players has a winning strategy~\cite{Feigenbaum1998}. The two-player games we will introduce in the following section are zero-sum, perfect-information, and polynomial-depth, but they are not computationally bounded: In each round, at most a polynomial number of problems in \pspace must be solved. We call these games zero-sum, perfect-information, polynomial-depth \emph{with a \pspace oracle}. Then, the polynomial-time alternating Turing machine that determines whether one of the players has a winning strategy for such a game has to use a polynomial number of oracle calls to \pspace problems in order to correctly simulate the game. Thus, using Proposition~\ref{prop:pspace-ap}(b), we have that: 
\begin{corollary}\label{cor:pspace-games}
   For a two-player, zero-sum, perfect-information, polynomial-depth game with a \pspace oracle, we can decide whether a player has a winning strategy in polynomial space.
\end{corollary}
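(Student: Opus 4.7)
The plan is to reduce the problem of deciding the existence of a winning strategy to membership in the complexity class $\ap^{\pspace[\text{poly}]}$, and then invoke Proposition~\ref{prop:pspace-ap}(b) to conclude that this class equals \pspace.

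First, I would design an alternating Turing machine $M$ that simulates the play of the game. The states of $M$ are partitioned so that existential states correspond to positions where the player whose winning strategy we are testing for moves, and universal states correspond to positions where the opponent moves; this is the standard encoding of two-player, zero-sum, perfect-information games into alternation. At each configuration of $M$, the machine stores the current game position (which has polynomial size since the game has polynomial depth and each move is described by polynomially much information) and nondeterministically guesses the next move for the player to act, branching existentially or universally accordingly. Since the game has polynomial depth, the computation tree of $M$ has polynomial depth, and hence $M$ halts in polynomial time along every branch, provided that the per-round computation it performs is itself polynomial-time.

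The per-round computation is where the \pspace oracle comes into play. By hypothesis, at each round, determining the legal moves, updating the position, and evaluating any termination/winning condition requires at most polynomially many subproblems, each of which lies in \pspace. I would therefore equip $M$ with a \pspace oracle and have it resolve each such subproblem by a single oracle call. The total number of oracle calls along any branch of $M$ is bounded by (polynomial depth) $\times$ (polynomially many calls per round), which is still polynomial in the input size. Consequently, the question of whether the designated player has a winning strategy is accepted by $M$ exactly when the initial configuration of $M$ is accepting, placing the problem in $\ap^{\pspace[\text{poly}]}$.

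Finally, I would apply Proposition~\ref{prop:pspace-ap}(b) to conclude $\ap^{\pspace[\text{poly}]} = \pspace$, which yields a \pspace decision procedure for the existence of a winning strategy, as claimed. The only nontrivial conceptual point, and the step most deserving of care, is the accounting of oracle calls to ensure they remain polynomially bounded along each branch; everything else is a direct instantiation of the standard correspondence between two-player games and alternation, combined with the stated oracle-class collapse.
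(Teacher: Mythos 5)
Your proposal is correct and follows essentially the same route as the paper: encode the game as a polynomial-time alternating Turing machine (the standard correspondence for two-player, zero-sum, perfect-information, polynomial-depth games), observe that each round requires only polynomially many \pspace oracle calls, and conclude via $\ap^{\pspace[\text{poly}]}=\pspace$ from Proposition~\ref{prop:pspace-ap}(b). The only difference is that the paper cites the game-to-alternation correspondence from the literature rather than spelling out the construction of the machine.
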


\section{The complexity of deciding characteristic formulae within $\mathL_{nS}$, $n\geq 3$}\label{section:deciding-char-nS}

Since the characteristic formulae within $\mathL_{nS}$ coincide with the satisfiable and prime ones, and satisfiability in $\mathL_{nS}$  is \pspace-complete for $n\geq 3$,
we investigate the complexity of the Formula Primality Problem for $\mathL_{nS}$, where $n\geq 3$. In this section, we present a polynomial-space algorithm that solves this problem, matching the lower bound from Proposition~\ref{prop:primality}. To this end, we introduce two families of games: the \emph{char-for-n-nested-simulation-equivalence} game, referred to as the \nsimeq game, for every $n\geq 1$, and the \emph{prime-for-n-nested-simulation-preorder} game, referred to as the \nsimpre game, for every $n\geq 3$.

Assume that $\varphi$ is a satisfiable formula in $\mathL_{nS}$. All of the games are played between players $A$ and~$B$. If a game is initiated on $\varphi$, it starts with two or three states, each of which has a label equal to~$\{\varphi\}$.  As the game proceeds, the players extend the already existing structures and explore (two or three) tableaux for $\varphi$ that satisfy some additional, game-specific conditions. Player $A$ has a winning strategy for the \nsimeq game iff every two processes that satisfy $\varphi$ are equivalent modulo $\equiv_{nS}$---that is, $\varphi$ is a characteristic formula modulo $\equiv_{nS}$. The existence of a winning strategy for player $A$ in the \nsimpre game on $\varphi$ is equivalent to the primality of $\varphi$ in $\mathL_{nS}$.  A difference between the games \nsimeq and \nsimpre is that the former can be initiated on a satisfiable formula that belongs to $\mathL_{\ell S}$, where $\ell\geq n$, whereas the latter is only started on a satisfiable formula that is in $\mathL_{nS}$. When $A$ and $B$ play one of the games \nsimeq or \nsimpre, at some point, they have to play the \nmosimeq game initiated on states labelled with possibly different finite subsets of $\mathL_{nS}$ formulae. This is why the \nsimeq game is generalized to start with such labelled states. 

For the presentation of the games, let $\act=\{a_1,\dots,a_k\}$. Basic moves that $A$ and $B$ can play are presented in Table~\ref{tab:moves}. 

\begin{table}
\begin{center}
\begin{tabular}{ | P{1.9cm} | P{13cm} | } 
\hline
Move name  & Move description\\  
 \hline
Pl($\wedge$) & For every $\psi_1\wedge\psi_2\in L_i(p)$,  $Pl$ replaces $\psi_1\wedge\psi_2$ with both $\psi_1$ and $\psi_2$ in $L_i(p)$. \\  
 \hline
Pl($\vee$) & For every $\psi_1\vee\psi_2\in L_i(p)$,  $Pl$  chooses $\psi\in\{\psi_1,\psi_2\}$ and replaces $\psi_1\vee\psi_2$ with $\psi$ in $L_i(p)$.\\  
 \hline
Pl($\Diamond$) & For every $\langle a_j\rangle \psi\in L_i(p)$,  $Pl$ adds a new state $p'$ to $S_i$, $(p,p')$ to $R_{a_j}^i$, and sets $L_i(p')=\{\psi\}\cup\{\psi'\mid [a_j]\psi'\in L_i(p)\}$.\\  
 \hline
B($\square$) & $B$ chooses between doing nothing and picking some $1\leq j\leq k$. In the latter case, $B$ adds a new state $p'$ to $S_i$, $(p,p')$ to $R^i_{a_j}$, and  sets $L_i(p')=\{\psi ~ | ~ [a_j]\psi\in L_i(p)\}$.  \\  
 \hline
 A(sub) & For every $\psi\in\sub(\varphi)$, $A$ chooses between adding or not adding $\psi$ to $L_i(p)$.  \\  
 \hline
 A(rem) & For every $j$-successor $p'$ of $p$, $A$ removes $p'$ from $T_i$ if there is a $j$-successor $p''$ of $p$, such that $p'\neq p''$ and $L_i(p')\subseteq L_i(p'')$. \\  
 \hline
\end{tabular}
\end{center}
\caption{Basic moves that players $A$ and $B$ can play in any game initiated on formula $\varphi$. The description is for player $Pl\in\{\text{A,B}\}$ who plays on state $p\in S_i$, where $i\in\{1,2,3\}$, and action $a_j\in\act$.
}
\label{tab:moves}
\end{table}

\subsection{The \nsimeq game, $n\geq 1$}\label{subsec:the-simulation-game}

We present the first family of games.  We begin by describing the \simequiv game, followed by the \nsimeq game for $n\geq 2$. Let $\varphi$ be a satisfiable formula in $\mathL_{\ell S}$, where $\ell\geq n$. The games are defined so  
that player $A$ has a winning strategy for the \nsimeq game played on $\varphi$ iff every two processes satisfying $\varphi$ are $n$-nested-simulation equivalent: we prove this statement for the \simequiv game, and assuming that this is true for the \nmosimeq game, we show the statement for the \nsimeq game.

\paragraph{The \simequiv game}
We first introduce the \simequiv game started on $\varphi$. During the game, $B$ constructs two labelled trees $T_1$ and $T_2$ that correspond to two arbitrary processes $p_1$ and $p_2$ satisfying $\varphi$ and challenges $A$ to construct a simulation relation between the states of $T_1$ and $T_2$ showing that $p_1\curle_S p_2$. The labelled trees constructed by $B$ are denoted $T_1=(S_1,L_1,R^1_{a_1},\dots,R^1_{a_k})$ and $T_2=(S_2,L_2,R^2_{a_1},\dots,R^2_{a_k})$, and the game starts with $S_1=\{p_0^1\}$ and $S_2=\{p_0^2\}$, $L_1(p_0^1)=L_2(p_0^2)=\{\varphi\}$, and $R^i_{a_j}=\emptyset$, for every $i=1,2$ and $1\leq j\leq k$. We describe the $l$-th round of the game, where $l\geq 1$, in Table~\ref{tab:simulation-game}. States $p_1,p_2$ are $p_0^1,p_0^2$ respectively, if $l\in\{1,2\}$, or the two states that $B$ and $A$ respectively chose at the end of round~$l-1$, if~$l>2$. For two states $p,p'$ such that $(p,p')\in R_{a_j}$, we say that $p'$ is a $j$-successor of $p$. We use $p$, $p_1$, $p_2$, etc.~to denote both processes and states of the labelled trees; the intended meaning will be clear from the context.
\begin{table}
\begin{center}
\begin{tabular}{ | p{15cm} | } 
\hline
\textbf{$\mathbf{1}^{\text{st}}$ round.} 
 $B$ plays moves $B(\wedge)$ and $B(\vee)$ on $p_i$, for both $i=1,2$, until no formula can be replaced in $L_i(p_i)$. 
 If $\bigwedge L_i(p_i)$ becomes unsatisfiable, then $B$ loses.

\textbf{$\mathbf{l}^{\text{th}}$ round, $\mathbf{l\geq 2}$.} 
\begin{enumerate}
    \item For every $a_j\in \act$, $B$ plays as follows. He plays move $B(\Diamond)$ on $p_i$ for both $i=1,2$, and move $B(\Box)$ only on $p_1$. Then, for both $i=1,2$, $B$ plays moves $B(\wedge)$ and $B(\vee)$  on every $p_i'$ such that $(p_i,p_i')\in R_{a_j}^i$ until no formula can be replaced in $L_i(p_i')$. If $\bigwedge L_i(s)$ becomes unsatisfiable for some $i=1,2$ and $s\in S_i$, then $B$ loses. 
    \item $B$ chooses a $1\leq j\leq k$ and a $j$-successor $p_1'$ of $p_1$. If $p_1$ has no $j$-successors, then $B$ loses.
    \item $A$ chooses a $j$-successor $p_2'$  of $p_2$. If $p_2$ has no $j$-successors, then $A$ loses.
    \item The $l+1$-th round starts on $p_1'$, $p_2'$.
\end{enumerate}  \\
 \hline
\end{tabular}
\end{center}
\caption{The \simequiv game initiated on a satisfiable $\varphi\in\mathL_{\ell S}$, where $\ell\geq 1$.}
\label{tab:simulation-game}
\end{table}

\begin{example}\label{ex:simulation-game}
   (a) Let $\mathbf{0}$ denote $\bigwedge_{i=1}^k [a_k]\ff$ and consider the formula $\varphi=\langle a_1\rangle \mathbf{0}$. Note that both the processes $r_1=a_1.\mathtt{0}$ and $r_2=a_1.\mathtt{0}+a_2.\mathtt{0}$ satisfy $\varphi$, and $r_1\not\equiv_S  r_2$. Therefore, player $B$ should have a winning strategy for the \simequiv game on $\varphi$, which is true as $B$ can play as follows. At the first round, he can make no replacement in $L_i(p_i)$ for both $i=1,2$. At step~1 of the second round, he generates states $p_1'$ and $p_2'$ that are $1$-successors of $p_1$ and $p_2$ respectively, when he plays move B($\Diamond$). Then, when $B$ plays move B($\square$) on $p_1$, he chooses to generate state $p_1''$ that is a $2$-successor of $p_1$, adds $(p_1,p_1'')$ to $R^1_{a_2}$, and  sets $L_1(p_1'')=\emptyset$. He applies move B($\wedge$) on $p_i'$  to obtain $L_i(p_i')=\{[a_1]\ff,\dots,[a_k]\ff\}$ for $i=1,2$. At step~2, player~$B$ chooses $p_1''$ and since $p_2$ has no $2$-successors, $A$ loses at step~3.

    (b) On the other hand, player $A$ has a winning strategy for the \simequiv game initiated on $\psi=\langle a_1\rangle \mathbf{0}\wedge \bigwedge_{i=2}^k [a_i]\ff$. (Note that the process $r_1=a_1.\mathtt{0}$ is the unique process modulo $\equiv_S$ that satisfies $\psi$.) After completing the first round, $B$ generates two states $p_1'$ and $p_2'$ which are $1$-successors of $p_1$ and $p_2$ respectively, and sets $L_1(p_1')=L_2(p_2')=\{\mathbf{0}\}$ when applying move B($\Diamond$). If he chooses to generate a $j$-successor $p_1''$ of $p_1$, where $j\neq 1$, when he plays move B($\square$), then he loses, since $L_1(p_1'')=\{\ff\}$ is unsatisfiable. So, he chooses to do nothing at move B($\square$) and picks $p_1'$ at step~2. Then, $A$ picks $p_2'$ at step 3. In round~3, player $B$ either generates a $j$-successor of $p_1'$ for some $1\leq j\leq k$ when applying move B($\square$) and loses because the label set of the new state is unsatisfiable or generates no successors and loses at step~2. 
    \end{example}

The labelled trees $T_1, T_2$, constructed during the \simequiv game on $\varphi$, form partial tableaux for $\varphi$. This is because some states are abandoned during the game, which may result in $T_1, T_2$ failing to satisfy condition (iii) of Definition~\ref{def:hmltableau}. The \simequiv game can be generalized so that it starts with  $S_1=\{s_1\}$, $S_2=\{s_2\}$, $R_{a_j}^i$ being empty for every $i=1,2$ and $1\leq j\leq k$, and $L_1(s_1)=U_1$, $L_2(s_2)=U_2$, where $U_1,U_2$ are finite subsets of $\mathL_{\ell S}$, $\ell\geq 1$. We denote by $\simab(U_1,U_2)$ the \simequiv game that starts from the configuration just described. In particular, $\simab(\{\varphi\},\{\varphi\})$ is called the \simequiv game on $\varphi$.

 Let $p\in S_i$, where $i=1,2$. We denote by $\inlabeli(p)$ the initial label of $p$ before moves B($\wedge$) and B($\vee$) are applied on $p$ and $\finlabeli(p)$ the final label of $p$ after moves B($\wedge$) and B($\vee$) have been applied on $p$ (until no formula can be replaced in $L_i(p)$).
As shown in Example~\ref{ex:simulation-game}, in the \simequiv game on $\varphi$, player $B$ consistently plays $T_i$, $i=1,2$, on a process $r$  satisfying $\varphi$. Intuitively, $T_i$ represents part or all of $r$, viewing states in $S_i$ as processes reachable from $r$ and $R^i_{a_j}$ as transitions. The formal definition follows.
\begin{definition}\label{def:B-plays-consistently}
    Assume that the \simequiv game is played on $\varphi\in\mathL_{\ell S}$, $\ell\geq 1$. We say that $B$ plays $T_i$, where $i\in\{1,2\}$, consistently on a process $r$ if there is a mapping $map:S_i\rightarrow \proc$ such that the following conditions are satisfied:
  \begin{enumerate}
      \item for every $p\in S_i$, $map(p)\models \bigwedge \finlabeli(p)$,
      \item for every $(p,p')\in R^i_{a_j}$, $map(p)\myarrowasubj map(p')$, and 
      \item $map(p_0^i)=r$, where $p_0^i$ is the initial state of $T_i$.
  \end{enumerate}
\end{definition}

    Next, we prove that player $A$ has a winning strategy for the \simequiv game on $\varphi$ iff every two processes that satisfy $\varphi$ are simulation equivalent.

    \begin{restatable}{proposition}{simgamecorrect}\label{prop:simul-game}
     Let $\varphi\in\mathL_{\ell S}$, where $\ell\geq 1$, be a satisfiable formula. Player $A$ has a winning strategy for the \simequiv game on $\varphi$ iff  $r_1\equiv_S r_2$, for every two processes $r_1,r_2$ that satisfy $\varphi$.
    \end{restatable}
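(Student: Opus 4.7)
The plan is to prove both directions separately, exploiting a correspondence between winning strategies of $A$ and simulation relations between models of $\varphi$.

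For the forward direction ($\Rightarrow$), I would assume $A$ has a winning strategy $\sigma_A$, take arbitrary $r_1, r_2 \models \varphi$, and prove $r_1 \curle_S r_2$; the converse $r_2 \curle_S r_1$ follows symmetrically by swapping the roles of $r_1, r_2$ in $T_1, T_2$. The idea is to consider a play in which $B$ plays $T_1$ consistently with $r_1$ and $T_2$ consistently with $r_2$, in the sense of Definition~\ref{def:B-plays-consistently}; such a play is always available to $B$ because both $r_1$ and $r_2$ satisfy $\varphi$. In this play, $B$'s step-2 choice of a $j$-successor $p_1'$ of $p_1$ mirrors a transition $map(p_1) \myarrowasubj map(p_1')$ in $r_1$. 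Since $\sigma_A$ wins, $A$'s step-3 response yields a $j$-successor $p_2'$ of $p_2$ corresponding to a transition $map(p_2) \myarrowasubj map(p_2')$ in $r_2$. Letting $R$ collect every pair $(map(p_1), map(p_2))$ reachable under $\sigma_A$ as $B$ varies his step-2 choices, I would verify that $R$ is a simulation witnessing $r_1 \curle_S r_2$.

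For the backward direction ($\Leftarrow$), I would assume that every two models of $\varphi$ are $\equiv_S$-equivalent and construct a winning strategy for $A$ by maintaining the invariant that at every reachable position $(p_1, p_2)$ there exist processes $s_1, s_2$ with $s_i \models \bigwedge \finlabeli(p_i)$ for $i=1,2$ and $s_1 \curle_S s_2$. The base case holds because any $s_1 \models \bigwedge \finlabelone(p_0^1)$ and $s_2 \models \bigwedge \finlabeltwo(p_0^2)$ are models of $\varphi$ (both labels are obtained from $\{\varphi\}$ via propositional expansion) and are therefore $\equiv_S$-equivalent by hypothesis. For the inductive step, when $B$ selects $p_1'$, the relation $s_1 \curle_S s_2$ supplies a $j$-transition $s_2 \myarrowasubj s_2'$ matching the transition $s_1 \myarrowasubj s_1'$ that corresponds to $p_1'$; the strategy then picks $p_2'$ as the $B(\Diamond)$-successor of $p_2$ in $T_2$ whose label is satisfied by $s_2'$.

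The principal obstacle is ensuring, in the backward-direction inductive step, that the $j$-transition of $s_2$ obtained from the simulation actually corresponds to some $B(\Diamond)$-successor of $p_2$ in $T_2$. The game is asymmetric: $T_1$ may contain $j$-successors of $p_1$ created via $B(\Box)$ that are not forced by any $\langle a_j\rangle$-subformula of $L_1(p_1)$, whereas the $j$-successors of $p_2$ in $T_2$ correspond precisely to the $\langle a_j\rangle$-subformulae of $L_2(p_2)$. To overcome this, I would select $s_2$ as a minimal model of $\bigwedge L_2(p_2)$ whose $j$-transitions are in bijection with the $B(\Diamond)$-successors of $p_2$ in $T_2$, so that every matching transition of $s_2$ is realized by some $p_2'$ in $T_2$. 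The global hypothesis seeds the invariant at the root, and its propagation to descendants depends on this careful choice of $s_1, s_2$ as minimal realizations of the current labels whose transition structures align with the $B$-generated portions of $T_1, T_2$.
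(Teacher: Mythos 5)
Your backward direction is essentially the paper's argument: the ``minimal realizations of the current labels whose transition structures align with the $B$-generated portions of $T_1,T_2$'' are exactly the paper's mapping $map_B$ (Definition~\ref{def:mapB}), and you correctly identify that the whole argument hinges on choosing witnesses whose outgoing transitions are in bijection with the successors actually created in the trees. Your forward direction, however, is a genuinely different route: you extract a simulation directly from $A$'s winning strategy, whereas the paper proves the contrapositive, turning a pair of models with $r_1\not\curle_S r_2$ into a winning strategy for $B$ that follows a trace witnessing the failure of simulation (Proposition~\ref{prop:simulation-left-to-right}), and then uses the fact that both players cannot win the same game. Your route avoids that detour (and any appeal to determinacy), at the cost of having to argue carefully that the collected relation really is a simulation.

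That is where the first gap lies. If $B$ varies only his step-2 choices, the $j$-successors of $p_1$ he can pick from are just those created by B($\Diamond$) plus whatever B($\square$)-successors his fixed consistent play happened to add; a transition $map(p_1)\myarrowa s'$ of the actual derivative of $r_1$ that is not forced by any $\langle a\rangle$-formula in $L_1(p_1)$ need not be represented by any successor of $p_1$ in that play, and then $R$ fails the simulation condition at $(map(p_1),map(p_2))$. You must also let $B$ vary his B($\square$) moves and his disjunctive choices, so that in some consistent play a box-successor mapped to $s'$ is created and picked --- this is precisely the purpose of the asymmetric B($\square$) move, which you notice only when discussing the backward direction. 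The second gap is termination: maintaining your invariant shows that $A$ never gets stuck, but a winning strategy requires that $B$ eventually \emph{loses}. For this you need that, under the hypothesis, every model of $\varphi$ has depth at most $\md(\varphi)$ (the paper's Lemma~\ref{lem:bounded-depth}, proved by a trimming argument), so that $B$ cannot keep producing satisfiable labels beyond round $\md(\varphi)+1$ and must eventually fail to offer a successor at step 2.
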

    \begin{proofsketch}
        Assume that every two processes that satisfy $\varphi$ are simulation equivalent. If $B$ plays in the \simequiv game on $\varphi$ such that $L_i(s)$ always remains satisfiable for every $i=1,2$ and $s\in S_i$, then he plays the labelled tree $T_i$, $i=1,2$, consistently on a process $r_i$ that satisfies $\varphi$. That is, there exists a mapping $map:S_1\cup S_2\rightarrow \proc$ satisfying conditions 1--3 of Definition~\ref{def:B-plays-consistently}. Since $r_1\equiv_S r_2$, for any $j$-successor $p_1'$ chosen by $B$ at step 2 of a round, player $A$ can respond with a $j$-successor $p_2'$ at step 3, such that $map(p_1') \curle_S map(p_2')$. Thus, $A$ has a strategy to avoid losing in any round. Moreover, within at most $\md(\varphi) + 1$ rounds, $B$ either produces an unsatisfiable $L_i(s)$ or fails to generate any $j$-successors for $1 \leq j \leq k$, and hence loses. To prove the converse, we proceed by showing the contrapositive. Let $q_1,q_2$ be two processes that satisfy $\varphi$ and $q_1\not\curle_S q_2$. We can assume w.l.o.g.\ that $\depth(q_i)\leq\md(\varphi)+1$ for both $i=1,2$. Then, $B$ can play $T_i$ consistently on $q_i$ for both $i=1,2$ and, while constructing $T_1$, include a trace from $q_1$ that witnesses the failure of $q_1\curle_S q_2$. By the definition of $\curle_S$, there is $a_i\in\act$ such that either
         (a) $q_1\myarrowasubi$ and $q_2\notmyarrowasubi$, or
         (b) $q_1\myarrowasubi q_1'$ for some $q_1'$ such that $q_1'\not\curle_S q_2'$, for every $q_2\myarrowasubi q_2'$. 
        In  case (a), $B$ plays move B($\square$) in the second round to generate an $i$-successor $p_1'$ of $p_0^1$ and picks $p_1'$ at step 2. Since $T_2$ is played consistently on $q_2$ and $q_2\notmyarrowasubi$, $p_0^2$ has no $i$-successors after step 1 of round 2. Thus, $A$ cannot respond and loses at step 3. In  case (b), $B$ plays similarly: he plays B($\square$) to generate and pick an $i$-successor $p_1'$ and $A$ responds by picking an $i$-successor $p_2'$ of $p_0^2$. It holds that $map(p_1')\not\curle_S map(p_2')$ and $B$ can recursively apply the same strategy on $p_1'$ and $p_2'$. Since $\depth(q_i)\leq\md(\varphi)+1$, within at most $\md(\varphi)+2$ rounds, $B$ will generate some $j$-successor using B($\square$) for which $A$ has no response and, consequently, $A$ loses.
        Based on the above, the game always terminates no later than the ($\md(\varphi)+2$)-th round. By induction, using standard arguments for two-player, zero-sum games, we can show that either $A$ has a winning strategy or $B$ has a winning strategy. Then, the proposition holds.
    \end{proofsketch}

    \begin{proposition}\label{prop:sim-win-algo}
    Let $\varphi\in\mathL_{\ell S}$, $\ell\geq 1$, be a satisfiable formula. Deciding whether every two processes $p_1,p_2$ that satisfy $\varphi$ are equivalent modulo $\equiv_S$ can be done in polynomial space.
    \end{proposition}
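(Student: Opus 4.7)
The plan is to combine Proposition~\ref{prop:simul-game} with Corollary~\ref{cor:pspace-games}. By Proposition~\ref{prop:simul-game}, deciding whether every two processes satisfying $\varphi$ are equivalent modulo $\equiv_S$ coincides with deciding whether player $A$ has a winning strategy in the \simequiv game initiated on $\varphi$. It therefore suffices to show that this game fits the hypotheses of Corollary~\ref{cor:pspace-games}---that it is zero-sum, perfect-information, polynomial-depth, and uses a \pspace oracle---from which the polynomial-space bound follows immediately.

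Zero-sum and perfect information are evident from Table~\ref{tab:simulation-game}. For polynomial depth, the key observation is that every formula appearing in any label $L_i(s)$ during a play is a subformula of $\varphi$: moves B($\wedge$) and B($\vee$) replace a formula by one of its strict subformulae, while B($\Diamond$) and B($\square$) populate each newly-created successor's label with strict subformulae drawn from the parent's label. Hence every label is a subset of $\sub(\varphi)$, of size polynomial in $|\varphi|$. As noted in the proof sketch of Proposition~\ref{prop:simul-game}, the game terminates within at most $\md(\varphi)+2$ rounds, which is polynomial in $|\varphi|$. Between rounds only the currently active pair $(p_1,p_2)$ together with their labels must be remembered, since the previously constructed portions of $T_1,T_2$ are never reconsulted once successors are chosen; hence each configuration has polynomial size, and every move---apart from the satisfiability tests---is computable in polynomial time.

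The only non-polynomial-time ingredient is the check that triggers $B$'s loss when some $\bigwedge L_i(s)$ becomes unsatisfiable. Since $L_i(s)\subseteq\sub(\varphi)\subseteq\mathL_{\ell S}$, this is an instance of $\mathL_{\ell S}$-satisfiability, which is in \pspace by Corollary~\ref{cor:n-sat} (for $\ell\geq 3$) and by Proposition~\ref{prop:sat} (for $\ell\in\{1,2\}$). Each round therefore performs only polynomially many \pspace oracle calls, so Corollary~\ref{cor:pspace-games} delivers the claimed polynomial-space decision procedure. The main point to justify carefully is the book-keeping: that game configurations indeed stay polynomially sized throughout play and that the round bound of $\md(\varphi)+2$ qualifies as polynomial depth in the sense of Corollary~\ref{cor:pspace-games}; once these are settled, the result follows mechanically.
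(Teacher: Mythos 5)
Your proposal is correct and follows essentially the same route as the paper's proof: reduce to deciding whether $A$ has a winning strategy via Proposition~\ref{prop:simul-game}, verify that the \simequiv game is zero-sum, perfect-information, of polynomial depth (at most $\md(\varphi)+2$ rounds), and that the only expensive step is the polynomially many satisfiability checks, which are handled as \pspace oracle calls, then invoke Corollary~\ref{cor:pspace-games}. The extra book-keeping you supply (labels staying within $\sub(\varphi)$, configurations remaining polynomially sized) is a harmless elaboration of what the paper leaves implicit, and your uniform treatment of the $\ell=1$ case as a \pspace oracle is equivalent to the paper's separate appeal to Proposition~\ref{prop:sat}(a).
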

     \begin{proof}
     From Proposition~\ref{prop:simul-game}, it suffices to show that determining whether player $A$ has a winning strategy for the \simequiv game on $\varphi$ can be done in polynomial space. The \simequiv game is a zero-sum and perfect-information game. It is also of polynomial depth, since it stops after at most $\mod(\varphi)+2$ rounds. Finally, in every round, the satisfiability of $\bigwedge L_i(s)$ has to be checked a polynomial number of times. If $\varphi\in\mathL_S$, then Proposition~\ref{prop:sat}(a) yields that the game is computationally bounded. If $\varphi\in\mathL_{\ell S}$,  for some $\ell\geq 2$, then the \simequiv game is a game with a \pspace oracle by  Proposition~\ref{prop:sat} and Corollary~\ref{cor:n-sat}. The desired conclusion then follows from Corollary~\ref{cor:pspace-games}.
     \end{proof}

     By determining whether $A$ has a winning strategy for the game $\simab(U_1,U_2)$, where $U_1,U_2$ are two finite subsets of formulae, we can decide whether every process that satisfies $\bigwedge U_1$ is simulated by any process that satisfies $\bigwedge U_2$. Therefore, the latter problem also lies in \pspace. 

     \begin{corollary}\label{cor:simul-game}
   Let $U_1,U_2$ be finite sets of $\mathL_{\ell S}$, $\ell\geq 1$. Player $A$ has a winning strategy for $\simab(U_1,U_2)$ iff $p_1\curle_S p_2$, for every $p_1,p_2$ such that $p_1\models\bigwedge U_1$ and $p_2\models\bigwedge U_2$.
    \end{corollary}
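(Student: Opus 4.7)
The plan is to establish this corollary as an immediate generalization of Proposition~\ref{prop:simul-game}: the game $\simab(U_1,U_2)$ has the same rules as $\simab(\{\varphi\},\{\varphi\})$, with the only difference being that the two roots are labelled with possibly distinct finite sets $U_1$ and $U_2$ in place of a common $\{\varphi\}$. Consistency of $B$'s play (Definition~\ref{def:B-plays-consistently}) is re-interpreted so that the mapping $map$ sends $p_0^i$ to some process satisfying $\bigwedge U_i$, and all other ingredients---the basic moves of Table~\ref{tab:moves}, the round structure, and the polynomial bound on the number of rounds---remain unchanged.

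For the ``$\Leftarrow$'' direction, assume $p_1\curle_S p_2$ for every $p_1\models\bigwedge U_1$ and $p_2\models\bigwedge U_2$. Since $B$ loses outright if any $\bigwedge L_i(s)$ becomes unsatisfiable, we may restrict to plays where $B$ plays each $T_i$ consistently on some $r_i\models\bigwedge U_i$, witnessed by a mapping $map$; by hypothesis $r_1\curle_S r_2$. Player $A$'s strategy is to preserve the invariant $map(p_1')\curle_S map(p_2')$ for the currently active pair $(p_1',p_2')$ in each round: when $B$ picks a $j$-successor $p_1''$ of $p_1'$ at step~2, the simulation condition provides a $j$-successor $p_2''$ of $p_2'$ in $T_2$ with $map(p_1'')\curle_S map(p_2'')$, which $A$ selects at step~3. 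This is the same argument used in the ``if'' direction of Proposition~\ref{prop:simul-game}, now applied in the (possibly asymmetric) setting.

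For the ``$\Rightarrow$'' direction we argue by contrapositive: suppose $q_1\models\bigwedge U_1$ and $q_2\models\bigwedge U_2$ satisfy $q_1\not\curle_S q_2$. As in the proposition, we may assume without loss of generality that $q_1$ and $q_2$ have depth bounded by a polynomial in the modal depth of their labels. Player $B$ plays $T_1$ consistently on $q_1$ and $T_2$ consistently on $q_2$, chasing a witness of the simulation failure: at any round where the active pair $(p_1',p_2')$ satisfies $map(p_1')\not\curle_S map(p_2')$, either some action $a_j$ is enabled at $map(p_1')$ but not at $map(p_2')$, in which case $B$ uses move $B(\square)$ to generate a $j$-successor of $p_1'$ for which $A$ has no legal response in $T_2$; or $B$ selects an existing $j$-successor of $p_1'$ whose image is not simulated by any $j$-successor image of $p_2'$ in $T_2$, propagating the failure invariant to the next round. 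Since the game terminates within a bounded number of rounds, $B$ eventually wins, contradicting $A$'s assumed winning strategy.

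The main obstacle is the same technical point as in Proposition~\ref{prop:simul-game}: one must confirm that when $B$ plays $T_2$ consistently and simulation supplies a matching $j$-successor of $map(p_2')$ in the underlying LTS, a corresponding state is in fact realised among the $j$-successors of $p_2'$ already generated by $B(\Diamond)$. Since this verification is already handled in the proof of the proposition, the present corollary is obtained by a clean re-run of that argument, with the root labels $\{\varphi\}$ replaced throughout by the two (possibly different) sets $U_1$ and $U_2$.
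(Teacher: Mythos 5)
Your proposal is correct and matches the paper's treatment: the paper gives no separate proof of this corollary and regards it as an immediate generalization of Proposition~\ref{prop:simul-game}, obtained by re-running the arguments of Propositions~\ref{prop:simulation-right-to-left} and~\ref{prop:simulation-left-to-right} with the two roots labelled $U_1$ and $U_2$ and the invariant weakened from $\equiv_S$ to the one-directional $\curle_S$. Your two directions follow exactly those two proofs, so the approach is essentially identical.
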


     \begin{restatable}{corollary}{corsimgame}\label{cor:simul-game-complexity}
    Let $U_1,U_2$ be finite sets of $\mathL_{\ell S}$, $\ell\geq 1$. Deciding whether $p_1\curle_S p_2$ is true for every two processes $p_1,p_2$ such that $p_1\models\bigwedge U_1$ and $p_2\models\bigwedge U_2$ can be done in polynomial space.
    \end{restatable}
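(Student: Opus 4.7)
The plan is to reduce the statement to determining whether player $A$ has a winning strategy for the generalized game $\simab(U_1,U_2)$ and then argue, mutatis mutandis, as in the proof of Proposition~\ref{prop:sim-win-algo}. By Corollary~\ref{cor:simul-game}, the universally quantified statement ``$p_1 \curle_S p_2$ for every $p_1 \models \bigwedge U_1$ and $p_2 \models \bigwedge U_2$'' is equivalent to player $A$ having a winning strategy for $\simab(U_1,U_2)$. So it suffices to exhibit a polynomial-space algorithm deciding the existence of such a strategy.

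Next I would verify that $\simab(U_1,U_2)$ fits the hypotheses of Corollary~\ref{cor:pspace-games}. The game is zero-sum and of perfect information by inspection of its rules (Table~\ref{tab:simulation-game}). For the polynomial-depth bound, I would observe that every formula appearing in any label $L_i(s)$ generated during the game is a subformula of some element of $U_1 \cup U_2$: indeed, moves B($\wedge$), B($\vee$) replace a formula by its immediate subformulae, while moves B($\Diamond$) and B($\square$) generate successor-states whose labels consist of strict subformulae of formulae in the label of the predecessor. Consequently, the maximal modal depth of formulae occurring in labels strictly decreases from one round to the next, and the game terminates in at most $\max\{\md(\psi) \mid \psi \in U_1 \cup U_2\} + 2$ rounds, which is linear in $|U_1| + |U_2|$.

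It remains to bound the work carried out in a single round. The number of states generated in one round is at most $|\act|$ times the size of the current labels, and each label is a subset of $\sub(U_1) \cup \sub(U_2)$, hence polynomially bounded in $|U_1|+|U_2|$. The only step of the round that is not trivially polynomial-time is the satisfiability check for $\bigwedge L_i(s)$. When $\ell=1$, Proposition~\ref{prop:sat}(a) says this check lies in \cP, so the game is computationally bounded and Corollary~\ref{cor:pspace-games} applies directly. When $\ell\geq 2$, Proposition~\ref{prop:sat}(b) together with Corollary~\ref{cor:n-sat} places each such satisfiability check in \pspace, so $\simab(U_1,U_2)$ is a game with a \pspace oracle and Corollary~\ref{cor:pspace-games} still yields the desired polynomial-space decision procedure.

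The main obstacle, mild as it is, will be to justify carefully that all formulae occurring in labels throughout the play stay within $\sub(U_1) \cup \sub(U_2)$, which is what makes the per-round computation polynomial and the depth of the game polynomial; once this invariant is in place, the argument reduces to a straightforward specialization of the machinery already developed for Proposition~\ref{prop:sim-win-algo}.
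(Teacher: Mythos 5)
Your proposal is correct and follows essentially the same route as the paper: reduce the universally quantified statement to the existence of a winning strategy for $A$ in $\simab(U_1,U_2)$ via Corollary~\ref{cor:simul-game}, then apply Corollary~\ref{cor:pspace-games} after checking that the game is zero-sum, perfect-information, polynomial-depth, and has a \pspace oracle (with the $\ell=1$ case computationally bounded by Proposition~\ref{prop:sat}(a)). The only cosmetic difference is that the paper first dispenses with the case where $\bigwedge U_1$ or $\bigwedge U_2$ is unsatisfiable, while you spell out the subformula/modal-depth invariant bounding the number of rounds in somewhat more detail.
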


\paragraph{The \nsimeq game, $n\geq 2$}

Let $n\geq 2$. We denote by $\isimab(U_1,U_2)$, where $i\geq 2$, the \isimeq  that starts with  $S_1=\{s_1\}$, $S_2=\{s_2\}$, $R_{a_j}^t$ being empty for every $t=1,2$ and $1\leq j\leq k$, and $L_1(s_1)=U_1$, $L_2(s_2)=U_2$ with $U_1,U_2$ being finite subsets of $\mathL_{\ell S}$, $\ell\geq i$. Specifically, $\isimab(\{\varphi\},\{\varphi\})$ is called the \isimeq game on $\varphi$. We say that the \isimeq game is correct if Propositions~\ref{prop:simul-game} and~\ref{prop:sim-win-algo} and Corollaries~\ref{cor:simul-game} and~\ref{cor:simul-game-complexity} hold, when $\curle_S$ and $\equiv_S$ are replaced by $\curle_{iS}$ and $\equiv_{iS}$, respectively, $\mathL_{\ell S}$, with $\ell\geq 1$, by $\mathL_{\ell S}$, with $\ell\geq i$, and $\simab$ by $\isimab$. Assume that the \nmosimeq game has been defined so that it is played by players $A$ and $B$ and it is correct.

We can now describe the \nsimeq game on $\varphi$. 
Each round of the \nsimeq game on $\varphi$ follows the steps of the respective round of the \simequiv game on $\varphi$ and includes some additional steps. Analogously to the \simequiv game, if $A$ wins the \nsimeq game on $\varphi$, then the labelled trees $T_1, T_2$, constructed during the game, will correspond to two processes  $p_1,p_2$ such that $p_i\models \varphi$ for both $i=1,2$, and $p_1\curle_{nS} p_2$. By the definition of $\curle_{nS}$, a necessary condition for $p_1\curle_{nS}p_2$ is $p_1\equiv_{(n-1)S} p_2$. This fact is the intuition behind the step preceding the first round and steps 5--6 of the game described below.

The \nsimeq game on $\varphi$ starts with $A$ and $B$ playing the \nmosimeq game on $\varphi$. If $A$ wins, the \nsimeq game resumes. Otherwise, $A$ loses the \nsimeq game on $\varphi$. During the game, $B$ constructs two labelled trees, denoted  $T_1=(S_1,L_1,R^1_{a_1},\dots,R^1_{a_k})$ and $T_2=(S_2,L_2,R^2_{a_1},\dots,R^2_{a_k})$. The first round starts with $S_1=\{p_0^1\}$, $S_2=\{p_0^2\}$, $L_1(p_0^1)=L_2(p_0^2)=\{\varphi\}$, and all $R^i_{a_j}$ being empty, $i=1,2$, and is the same as the first round of the \simequiv game. For $l\geq 2$, the $l$-th round of the game includes steps 1--4 of the $l$-th round of the \simequiv game together with the following steps.
\begin{enumerate}\setcounter{enumi}{4}
    \item $A$ and $B$ play two versions of the \nmosimeq game:  $\nmosimab(L_1(p_1'),L_2(p_2'))$ and $\nmosimab(L_2(p_2'),$ $L_1(p_1'))$.
    \item If $A$ wins both versions of the \nmosimeq game at step 5,  round $l+1$ of the \nsimeq game starts on $p_1',p_2'$. Otherwise, $A$ loses.
\end{enumerate}

From the assumption that the \nmosimeq game is correct and using  arguments analogous to those we employed to prove the correctness of the \simequiv game, the \nsimeq game is also correct.

We already know that no formula in $\mathL_S$ is characteristic modulo $\equiv_S$, and so this problem is trivial~\cite{AcetoACI25}. From Proposition~\ref{prop:char-ns-complexity}, the problem is \pspace-hard for $\mathL_{nS}$, $n\geq 3$, and from this subsection, we derive the following result. We examine the same problem for $\mathL_{2S}$ in Section~\ref{section:deciding-char-2S}.

\begin{restatable}{corollary}{ncharequiv}\label{cor:n-char-equiv}
    Let $|\act|>1$. Deciding whether a formula $\varphi\in\mathL_{nS}$, where $n\geq 3$, is characteristic for a process modulo $\equiv_{nS}$ is \pspace-complete.
\end{restatable}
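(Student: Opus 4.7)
The plan is to obtain matching bounds: \pspace-hardness is immediate from Proposition~\ref{prop:char-ns-complexity}(b), so only the \pspace upper bound requires work. To that end, I would invoke Proposition~\ref{prop:char-mod-equiv}, which reduces deciding whether $\varphi \in \mathL_{nS}$ is characteristic modulo $\equiv_{nS}$ to the conjunction of two conditions: (a) $\varphi$ is satisfiable, and (b) every two processes $p,q$ satisfying $\varphi$ are $n$-nested-simulation equivalent.

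Condition (a) is in \pspace by Corollary~\ref{cor:n-sat}. For condition (b), I would rely on the \nsimeq game introduced in Section~\ref{section:deciding-char-nS}. By the correctness of that game, established inductively with Proposition~\ref{prop:simul-game} as the base case and the correctness of the \nmosimeq game as the inductive hypothesis, player $A$ has a winning strategy for the \nsimeq game initiated on $\varphi$ iff every two processes satisfying $\varphi$ lie in $\equiv_{nS}$. The algorithm then accepts precisely when $\varphi$ is satisfiable and $A$ has a winning strategy in that game.

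It remains to bound the resources. The \nsimeq game is zero-sum and of perfect information, and it terminates within $\md(\varphi)+2$ rounds, so it has polynomial depth. Each round requires polynomially many satisfiability checks on conjunctions built from subformulas of $\varphi$, and in steps 5--6 it spawns two instances of the \nmosimab game. By Corollary~\ref{cor:n-sat} each satisfiability check is a \pspace oracle query, and by the inductive correctness applied to the \nmosimeq game together with Corollary~\ref{cor:pspace-games}, each spawned subgame is also decidable in \pspace. Hence the \nsimeq game on $\varphi$ is a polynomial-depth game with a \pspace oracle, and Corollary~\ref{cor:pspace-games} yields a polynomial-space procedure for determining its winner.

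The main obstacle is formalising the inductive correctness of the \nsimeq game along the lines of the proof sketch of Proposition~\ref{prop:simul-game}: one direction exhibits an $A$-strategy that, against any $B$-play producing labelled trees $T_1,T_2$ consistent with processes $p_1,p_2$ satisfying $\varphi$, uses the inductively guaranteed wins in the two \nmosimab instances to maintain the invariant $p_1 \curle_{nS} p_2$ (and symmetrically $p_2 \curle_{nS} p_1$); the other direction shows that any failure of $\equiv_{nS}$ between two satisfying processes yields a $B$-strategy that drives $A$ into an unsatisfiable label, an unmatchable transition, or a losing \nmosimab subgame. Stitching everything together via Proposition~\ref{prop:pspace-ap}(b), which gives $\pspace^{\pspace[\text{poly}]} = \pspace$, completes the upper bound and establishes \pspace-completeness.
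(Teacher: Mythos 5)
Your proposal is correct and follows essentially the same route as the paper: \pspace-hardness via Proposition~\ref{prop:char-ns-complexity}(b), and membership by combining Proposition~\ref{prop:char-mod-equiv} with the \pspace satisfiability check (Corollary~\ref{cor:n-sat}) and the game-based polynomial-space procedure for checking that all models of $\varphi$ are $\equiv_{nS}$-equivalent (the paper packages the latter as Proposition~\ref{prop:twosim-win-algo}, whose proof is exactly the resource-bounding argument you inline via Corollary~\ref{cor:pspace-games}). The only nitpick is in your sketch of the converse direction of the game's correctness: in the \nsimeq game it is $B$ who constructs the labelled trees, so $B$ cannot drive $A$ into an unsatisfiable label --- $A$ loses only by failing to produce a matching $j$-successor or by losing a spawned \nmosimab subgame.
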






\subsection{The \nsimpre game, $n\geq 3$}\label{subsec:the-n-prime-game}

Let $n\geq 3$. We use the \nsimpre game  on a satisfiable $\varphi\in\mathL_{nS}$ to check whether $\varphi$ is prime in $\mathL_{nS}$, and thus characteristic for a process within $\mathL_{nS}$.
To this end, the \nsimpre game is developed so that $A$ has a winning strategy iff for every two processes $p_1,p_2$ satisfying $\varphi$ there is a process $q$ satisfying $\varphi$ and $q\curle_{nS} p_i$ for both $i=1,2$. We then show that the latter statement is equivalent to $\varphi$ being characteristic within $\mathL_{nS}$; that is, there is a process $q$ satisfying $\varphi$ such that for all processes $p$ satisfying $\varphi$, 
$q\curle_{nS} p$.

The game is presented in Table~\ref{tab:n-prime-game}. $B$ constructs two labelled trees, denoted  $T_1=(S_1,L_1,R^1_{a_1},\dots,R^1_{a_k})$ and $T_2=(S_2,L_2,R^2_{a_1},\dots,R^2_{a_k})$, and  $A$ constructs a third labelled tree denoted  $T_3=(S_3,L_3,R^3_{a_1},\dots,R^3_{a_k})$. The game starts with $S_1=\{p^1_0\}$, $S_2=\{p^2_0\}$, $S_3=\{q_0\}$, $L_1(p^1_0)=L_2(p^2_0)=L_3(q_0)=\{\varphi\}$, and all $R^i_{a_j}$ being empty, where $i=1,2,3$. We describe the $l$-th round of the game for $l\geq 1$. States $p_1,p_2$, and $q$ are  equal to $p^1_0,p^2_0$, and $q_0$, respectively, if $l\in\{1,2\}$ or $p_1,p_2$ are the states that $A$ chose at the end of round $l-1$ and $q$ is the state that $B$ chose at the end of round $l-1$, if $l>2$.
 \begin{table}
\begin{center}
\begin{tabular}{ | p{15cm} | } 
\hline
\textbf{$\mathbf{1}^{\text{st}}$ round.} 
\begin{enumerate}
    \item $A$ and $B$ play $\nmosimab(\{\varphi\},\{\varphi\})$. If $A$ wins, they continue playing the \nsimpre game. Otherwise, $B$ wins the \nsimpre game. 
    \item  $B$ plays moves B($\wedge$) and B($\vee$) on $p_i$, for both $i=1,2$, until no formula can be replaced in $L_i(p_i)$.
    If $\bigwedge L_i(p_i)$ becomes unsatisfiable, then $B$ loses.
    \item $A$ plays move A(sub) once on $q$ and then she plays moves A($\wedge$) and A($\vee$) on $q$ until no formula can be replaced in $L_3(q)$. 
    If $\bigwedge L_3(q)$ becomes unsatisfiable, then $A$ loses.
\end{enumerate} 

\textbf{$\mathbf{l}^{\text{th}}$ round, $\mathbf{l\geq 2}$.}
\begin{enumerate}
    \item For every $a_j\in \act$ and both $i=1,2$, $B$ plays as follows. He plays move B($\Diamond$) on $p_i$. Then, $B$ plays moves B($\wedge$) and B($\vee$)  on every $p_i'$ such that $(p_i,p_i')\in R_{a_j}^i$ until no formula can be replaced in $L_i(p_i')$.
    If, for some $s\in S_i$, $\bigwedge L_i(s)$ is unsatisfiable, then $B$ loses.
    \item  For every $a_j\in \act$, $A$ plays as follows. She plays move A($\Diamond$) on $q$. Then, for every $j$-successor $q'$ of $q$, $A$ plays move A(sub) once and moves A($\wedge$) and A($\vee$) on $q'$ until no formula can be replaced in $L_3(q')$. Finally, $A$ plays move A(rem) on $q$. If, for some $s\in S_3$, $L_3(s)$ is unsatisfiable, then $A$ loses.
    \item $B$ chooses a $j\in\{1,\dots,k\}$ and a $j$-successor $q'$ of $q$. If $q$ has no $j$-successors, then $B$ loses.
    \item $A$ chooses two states $p_1'$ and $p_2'$ that are  $j$-successors of $p_1$ and  $p_2$ respectively. If some of $p_1$ and $p_2$ has no $j$-successors, then $A$ loses.
     \item $A$ and $B$ play the following four games: (i) $\nmosimab(L_3(q'),L_1(p_1'))$, (ii) $\nmosimab(L_1(p_1'),L_3(q'))$, (iii) $\nmosimab(L_3(q'),L_2(p_2'))$, and (iv) $\nmosimab(L_2(p_2'),L_3(q'))$. If $A$ loses any of (i)--(iv), then $A$ loses.
    \item If $l=\md(\varphi)+2$, the game ends and $B$ wins. If $l\leq \md(\varphi)+1$, the $l+1$-th round  starts on $p_1'$, $p_2'$, and $q'$. 
\end{enumerate} \\
 \hline
\end{tabular}
\end{center}
\caption{The \nsimpre game, where $n\geq 3$, initiated on a satisfiable $\varphi\in\mathL_{nS}$.}
\label{tab:n-prime-game}
\end{table}

All moves in the \nsimpre game align with those in the \nsimeq game, except for the A(sub) and A(rem) moves. Below, we provide the intuition behind these two specific moves. Let $\act=\{a,b\}$ and consider a formula $\varphi$ of the form $\langle a\rangle\psi_1\wedge\langle a\rangle \psi_2\wedge [a]\psi\wedge [b]\ff$, which is characteristic within  $\mathL_{nS}$. Assume that $\psi_1\wedge\psi$ is not characteristic within $\mathL_{nS}$. Then, $\psi_2\wedge\psi$ must be characteristic and must entail $\psi_1\wedge \psi$, i.e.\ $\psi_2\wedge\psi\models\psi_1\wedge\psi$. This implies that removing $\langle a\rangle\psi_1$ from $\varphi$ yields a logically equivalent formula. Now, let $s\in S_3$ be 
a state in the tree constructed by $A$, such that $\finlabelthree(s)=\{\varphi\}$. When $A$ generates two states $s_1$ and $s_2$ with $L_3(s_i)=\{\psi_i,\psi\}$ for $i=1,2$, she can choose to apply move A(sub) to add all required formulae to $L_3(s_2)$, so that, in the end, $\finlabelthree(s_1)\subseteq \finlabelthree(s_2)$ holds. Thus, when $A$ plays A(rem), she can remove $s_1$. Furthermore, she can ensure that $\bigwedge\finlabelthree(s_2)$ remains characteristic. By applying A(sub) and A(rem) according to this strategy, $B$ is forced, at step 3, to choose a state whose label set corresponds to a characteristic formula. This, in turn, allows $A$ to complete each round without losing.

\begin{restatable}{proposition}{Awinsprime}\label{prop:A-wins-prime}
Let $\varphi\in\mathL_{nS}$, where $n\geq 3$, be satisfiable. Then, $A$ has a winning strategy for the \nsimpre game on $\varphi\in\mathL_{nS}$ iff $\varphi$ is characteristic for some process within $\mathL_{nS}$.
\end{restatable}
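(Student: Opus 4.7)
The plan is to prove both directions of the biconditional by leveraging the inductive correctness of the \nmosimeq game (assumed for $n-1$; the base case $n=3$ rests on the correctness of \twosimeq, ultimately established via \simequiv in Section~\ref{subsec:the-simulation-game}) and by exploiting Proposition~\ref{prop:charact-via-primality}: since $\varphi$ is satisfiable, it is characteristic within $\mathL_{nS}$ iff there is a $\curle_{nS}$-minimum process $p^*\models\varphi$.

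For the ($\Leftarrow$) direction, assume $\varphi$ is characteristic within $\mathL_{nS}$ for some $p^*$. Proposition~\ref{logical_characterizations} gives $p^*\curle_{nS} p$ for every $p\models\varphi$, and condition~(ii) of Definition~\ref{Def:beh-preorders} together with the inclusion $\curle_{nS}\subseteq\curle_{(n-1)S}$ yields $p^*\equiv_{(n-1)S} p$, so $A$ wins the initial $\nmosimab(\{\varphi\},\{\varphi\})$ of round~1 by the inductive correctness. I would then describe an $A$-strategy that maintains a map $\mu\colon S_3\to\reach(p^*)$ with $\mu(q)\models\bigwedge L_3(q)$ throughout the play: A(sub) adds exactly those subformulae of $\varphi$ satisfied by $\mu(q)$; A($\wedge$) and A($\vee$) resolve labels consistently with $\mu(q)$; A($\Diamond$) creates one successor $q'$ per diamond $\langle a_j\rangle\psi\in L_3(q)$, and $\mu$ sends $q'$ to a witness transition $\mu(q)\myarrowasubj r$ with $r\models\psi$; A(rem) deletes only dominated siblings. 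At step~4, condition~(i) of $\curle_{nS}$ combined with the invariant $\mu(q)\curle_{nS}\tilde{p}_i$ (where $\tilde{p}_i$ is the process $B$ consistently maps $p_i$ to) supplies the required $j$-successors; at step~5 the four \nmosimeq games are won because $\mu(q')\equiv_{(n-1)S}\tilde{p}_i'$ and by the inductive hypothesis. The game terminates by round $\depth(p^*)+1\leq\md(\varphi)+1$, when $L_3(q)$ contains no diamonds and $B$ loses at step~3.

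For the ($\Rightarrow$) direction, assume $A$ has a winning strategy $\sigma_A$. I would first show that for any two processes $r_1,r_2\models\varphi$ (WLOG of depth at most $\md(\varphi)$) there exists $q\models\varphi$ with $q\curle_{nS} r_i$ for $i=1,2$: let $B$ play $T_i$ consistently with $r_i$ and adversarially at step~3; $\sigma_A$ keeps every $L_3(s)$ satisfiable and services every diamond via A($\Diamond$), so $T_3$ meets Definition~\ref{def:hmltableau} and Remark~\ref{rem:tableausat} extracts a process $q\models\varphi$. A depth-induction using the step-5 outcomes (which by inductive correctness of \nmosimeq force $(n-1)S$-equivalence between $q$'s subprocesses and the successors of $r_i$ picked at step~4) together with $A$'s step-4 matching of transitions then yields $q\curle_{nS} r_i$. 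Starting from any $r_0\models\varphi$ and iterating this construction with $r_1=r_2=r_k$ produces a $\curle_{nS}$-descending chain in the finite set of $\equiv_{nS}$-classes of bounded-depth satisfying processes, which stabilizes at some $p^*\models\varphi$. Applying the construction once more with $r_1=p^*$ and $r_2=r$ for arbitrary $r\models\varphi$ and using stabilization gives $p^*\curle_{nS} r$, so $p^*$ is $\curle_{nS}$-minimum and $\varphi$ is characteristic for $p^*$.

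The hardest part will be the bookkeeping around the A(sub) and A(rem) moves. In ($\Leftarrow$), for each step-5 game to succeed, $L_3(q')$ must pin down the $\equiv_{(n-1)S}$-class of $\mu(q')$ finely enough that \emph{every} process satisfying $\bigwedge L_3(q')$ is $(n-1)S$-related to \emph{every} process satisfying $\bigwedge L_i(p_i')$; arguing that A(sub) enrichments together with A(rem) pruning of dominated successors achieves this, while ensuring $A$'s choices do not introduce spurious diamonds that would prevent termination by round $\md(\varphi)+1$, is the technical heart of the direction. In ($\Rightarrow$), verifying that $T_3$ built under $\sigma_A$ satisfies all clauses of Definition~\ref{def:hmltableau} against every $B$-response, so that Remark~\ref{rem:tableausat} yields a usable witness $q$, is the parallel subtle point.
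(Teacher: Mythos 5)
Your overall architecture mirrors the paper's (a constructive $A$-strategy for the ``if'' direction, and extraction of a small common $\curle_{nS}$-lower bound from the game tree for the ``only if'' direction), but both directions have a genuine gap at exactly the point where the real work lies.

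In the ($\Leftarrow$) direction, the semantic invariant $\mu(q)\models\bigwedge L_3(q)$ is not enough to win the step-5 subgames. By the correctness of the \nmosimeq game, $A$ wins $\nmosimab(L_3(q'),L_1(p_1'))$ only if \emph{every} model of $\bigwedge L_3(q')$ is $(n{-}1)S$-below \emph{every} model of $\bigwedge L_1(p_1')$ --- knowing that the particular witnesses $\mu(q')$ and $\tilde p_1'$ are $(n{-}1)S$-equivalent says nothing about other models of these label sets. You flag this as ``the technical heart'' but do not supply the argument; it is precisely the paper's content. The paper's strategy maintains the \emph{syntactic} invariant that $\bigwedge\finlabelthree(q)$ is characteristic (satisfiable and prime) and is entailed by $\bigwedge\finlabeli(p_i)$; then Lemma~\ref{lem:2s-equiv-processes} yields the needed universal statement. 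Maintaining that invariant across A($\Diamond$) requires showing that among the successor formulae $\Phi_i=\phi_i\wedge\bigwedge_k\psi_k$ generated from a characteristic label, at least one is prime, and every non-prime one is entailed by a prime one (so A(sub) followed by A(rem) can prune it). This is Lemmas~\ref{lem:step-Ad-prime}, \ref{lem:step-Ad-prime-second} and \ref{lem:step-Adastarabe-prime}, proved via a pigeonhole argument on a cycle of entailments $\Phi_{i_2}\models\Psi_{i_1}\models\Phi_{i_1}\models\cdots$; nothing in your sketch substitutes for it.

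In the ($\Rightarrow$) direction, the stabilization argument is flawed as stated. Iterating the pairwise-lower-bound construction with $r_1=r_2=r_k$ can return a process equivalent to $r_k$ itself (any process is a lower bound of itself with itself), so the chain can stabilize immediately at $r_0$, which need not be $\curle_{nS}$-minimal among the satisfying processes. Consequently, applying the construction once more with $r_1=p^*$, $r_2=r$ only yields some $q\models\varphi$ with $q\curle_{nS}p^*$ and $q\curle_{nS}r$; ``stabilization'' does not upgrade this to $p^*\curle_{nS}r$. The correct argument must combine lower bounds across \emph{all} the (finitely many, size-bounded) satisfying processes, as in the paper's Lemma~\ref{lem:there-is-for-every-small}, which pairs up the elements of $\expphi$ and recurses. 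Your use of a tableau extraction for the witness $q$ is also slightly off, since A(rem) deletes states and $T_3$ need not satisfy clause (iii) of Definition~\ref{def:hmltableau}; the paper instead builds the witness $t(q,p_1,p_2)$ directly by recursion on the game tree (Lemma~\ref{lem:As-play}), which is the cleaner route.
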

\begin{proofsketch}
   Let the \nsimpre game be initiated on $\varphi$. If $\varphi$ is characteristic within $\mathL_{nS}$, then $A$ has a strategy such that, for every $l\geq 2$, at the beginning of round $l$, $\bigwedge \finlabelthree(q)$  is characteristic within  $\mathL_{nS}$ and  $\bigwedge \finlabeli(p_i)\models \bigwedge \finlabelthree(q)$, for both $i=1,2$. She applies moves A(sub), A($\wedge$), A($\vee$) and A(rem) so that, for every $s\in S_3$, if $s$ is not removed from $T_3$, then $\bigwedge L_3(s)$ is characteristic and, for every choice of $B$ at step 3, she can respond with $p_1$ and $p_2$ such that $\bigwedge\finlabeli(p_i)\models\bigwedge \finlabelthree(q)$ for $i=1,2$. Therefore, she can continue playing the game until $B$ loses in some round. Moreover, in this case, the game does not last for more than $\md(\varphi)+1$ rounds. For the converse, assume that $A$ has a winning strategy for the \nsimpre game on $\varphi$. Let $r_1,r_2$ be two processes that satisfy $\varphi$ and let $B$ play $T_i$ consistently on $r_i$ for $i=1,2$. Then, there is a process $t$ that satisfies $\varphi$ such that $t\curle_{nS} r_i$, for $i=1,2$, and $|t|\leq (2m+1)^{m+1}$, where $m=|\varphi|$. In fact, $t$ is the process on which the strategy of $A$ is based. Then, we can show that there is a process $q$ that satisfies $\varphi$ such that $|q|\leq (2m+1)^{m+1}$ and $q\curle_{nS} r$, for every process $r$ that satisfies $\varphi$. Consequently, $\varphi$ is characteristic for some process within $\mathL_{nS}$.
\end{proofsketch}

\begin{restatable}{theorem}{nprimepspace}\label{prop:3S-char-in-pspace}
 The Formula Primality Problem for $\mathL_{nS}$, $n\geq 3$, is  \pspace-complete.
\end{restatable}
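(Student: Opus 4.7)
The plan is to split the proof into the standard hardness-plus-membership shape. The \pspace-hardness half is already in hand: Corollary~\ref{cor:n-sat} states that the Formula Primality Problem for $\mathL_{nS}$ is \pspace-hard for every $n\geq 3$. So the only work is to give a polynomial-space decision procedure, and the \nsimpre game is tailor-made for this.

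On input $\varphi\in\mathL_{nS}$, I would first run the satisfiability test for $\mathL_{nS}$, which is in \pspace by Corollary~\ref{cor:n-sat}. If $\varphi$ is unsatisfiable, the algorithm immediately accepts, since any unsatisfiable formula is prime by the remark following Definition~\ref{def:prime-formula}. Otherwise, Proposition~\ref{prop:charact-via-primality} reduces primality of a satisfiable $\varphi$ in $\mathL_{nS}$ to $\varphi$ being characteristic within $\mathL_{nS}$, and Proposition~\ref{prop:A-wins-prime} reduces this in turn to deciding whether player $A$ has a winning strategy in the \nsimpre game on $\varphi$. So it suffices to show that deciding the winner of the \nsimpre game on $\varphi$ can be done in polynomial space.

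For this, I would appeal to Corollary~\ref{cor:pspace-games} and verify its three hypotheses for the \nsimpre game. Zero-sum and perfect information are immediate from the definition. Polynomial depth follows from the termination clause in step~6, since the game lasts at most $\md(\varphi)+2\leq |\varphi|+2$ rounds, and in each round the players touch only polynomially many formulae drawn from $\sub(\varphi)$, so all labelled trees constructed are of polynomial size. For the computational content of a round, the only non-trivial operations are (i) satisfiability checks of conjunctions $\bigwedge L_i(s)$ of subformulae of $\varphi$, which are in \pspace by Corollary~\ref{cor:n-sat}, and (ii) the four invocations of the \nmosimeq game at step~5. Provided deciding \nmosimeq is in \pspace, each round performs polynomially many \pspace oracle queries, and Corollary~\ref{cor:pspace-games} then places the whole game in \pspace.

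The remaining (and main) obstacle is therefore to argue that deciding \isimeq is in \pspace for every $1\leq i\leq n-1$, which I would prove by induction on $i$. The base case $i=1$ is exactly Corollary~\ref{cor:simul-game-complexity}. For the inductive step, assume \isimeq is in \pspace; the $(i{+}1)$-game differs from the $i$-game only by augmenting each round with polynomially many calls to $\isimab(U_1,U_2)$ on finite subsets $U_1,U_2\subseteq\sub(\varphi)$ (together with the satisfiability checks, which remain in \pspace). Thus the $(i{+}1)$-game is again zero-sum, perfect-information, of polynomial depth, and uses a polynomial number of \pspace oracle queries per round, so by Corollary~\ref{cor:pspace-games} it is decidable in \pspace. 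Iterating this $n-1$ times, and then applying the same argument one last time to the \nsimpre game itself, yields the desired \pspace upper bound. Combined with the \pspace-hardness from Corollary~\ref{cor:n-sat}, this establishes the theorem.
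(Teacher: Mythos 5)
Your proposal is correct and follows essentially the same route as the paper: \pspace-hardness from Corollary~\ref{cor:n-sat}, a satisfiability pre-check, reduction of primality to the existence of a winning strategy for $A$ in the \nsimpre game via Propositions~\ref{prop:charact-via-primality} and~\ref{prop:A-wins-prime}, and an application of Corollary~\ref{cor:pspace-games}. Your explicit induction on $i$ showing that the \isimeq games are decidable in \pspace is exactly what the paper packages as the ``correctness'' of the \nmosimeq game (Proposition~\ref{prop:twosim-win-algo}), so the two arguments coincide.
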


\begin{restatable}{corollary}{ncharpspace}\label{cor:3s-char-pspacec}
  Let $|\act|>1$.  Deciding whether a formula in $\mathL_{nS}$, $n\geq 3$, is characteristic for a process within $\mathL_{nS}$ is \pspace-complete.
\end{restatable}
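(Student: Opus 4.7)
The plan is to combine the characterization of characteristic formulae via primality and satisfiability with the complexity bounds already established for these two problems in $\mathL_{nS}$ for $n \geq 3$. Specifically, Proposition~\ref{prop:charact-via-primality} reduces the problem of deciding whether $\varphi \in \mathL_{nS}$ is characteristic for some process within $\mathL_{nS}$ to the conjunction of two checks: (i)~$\varphi$ is satisfiable in $\mathL_{nS}$; and (ii)~$\varphi$ is prime in $\mathL_{nS}$.

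For the upper bound, I would invoke Corollary~\ref{cor:n-sat}, which states that satisfiability in $\mathL_{nS}$ for $n\geq 3$ is in \pspace, and Theorem~\ref{prop:3S-char-in-pspace}, which states that the Formula Primality Problem for $\mathL_{nS}$, $n\geq 3$, is in \pspace. Hence, one can decide both (i) and (ii) using polynomial space, and the conjunction of two \pspace checks remains in \pspace (since \pspace is closed under intersection and one can simply run the two decision procedures sequentially, reusing space). This places the problem in \pspace.

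For the lower bound, I would directly appeal to Proposition~\ref{prop:char-ns-complexity}(b), which already establishes that, under the assumption $|\act|>1$, deciding whether a formula in $\mathL_{nS}$, $n\geq 3$, is characteristic for a process within $\mathL_{nS}$ is \pspace-hard. No additional reduction is required.

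Combining the matching upper and lower bounds yields the desired \pspace-completeness. The main conceptual work has already been done in Theorem~\ref{prop:3S-char-in-pspace} (which, in turn, rests on the \nsimpre game and Proposition~\ref{prop:A-wins-prime}); the corollary itself is essentially a bookkeeping step, and I do not foresee a real obstacle beyond carefully citing the two ingredients of Proposition~\ref{prop:charact-via-primality}.
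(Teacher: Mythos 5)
Your proposal is correct and follows exactly the paper's own argument: the upper bound comes from Proposition~\ref{prop:charact-via-primality} combined with Corollary~\ref{cor:n-sat} and Theorem~\ref{prop:3S-char-in-pspace}, and the lower bound is Proposition~\ref{prop:char-ns-complexity}(b). Nothing is missing.
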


\section{The complexity of deciding characteristic formulae within $\mathL_{2S}$}\label{section:deciding-char-2S}
The goal of this subsection is to establish the following results on the complexity of the Formula Primality problem for $\mathL_{2S}$ and of the problem of deciding characteristic formulae within $\mathL_{2S}$ (and modulo $\equiv_{2S}$). 
\begin{algorithm}
\caption{Algorithm $\mathtt{ConPro}$ that takes as input $\varphi\in\mathL_{2S}$, and extends the tableau construction for $\varphi$ with lines 20--30. }\label{alg:process-construction}
\DontPrintSemicolon
\KwIn{$\varphi\in\mathL_{2S}$}
{$S\gets\{s_0\}$}\;
{$L(s_0)=\{\varphi\}$, $d(s_0)\gets 0$}\;
{$BoxCount\gets 0$}\;
\lFor{all $a_j\in \act$}{$R_{a_j}\gets\emptyset$}
{$Q.\mathtt{enqueue}(s_0)$}\;
\While{$Q$ is not empty}{
{$s\gets Q.\mathtt{dequeue()}$}\;
\While{$L(s)$ contains $\psi_1\wedge\psi_2$ or $\phi_1\vee\phi_2$}
{{$L(s)\gets L(s)\setminus\{\psi_1\wedge\psi_2\}\cup\{\psi_1\}\cup\{\psi_2\}$}\;
{non-deterministically choose $\phi$ between $\phi_1$ and $\phi_2$}\;
{$L(s)\gets L(s)\setminus\{\phi_1\vee\phi_2\}\cup\{\phi\}$}\;}
\lIf{$\ff\in L(s)$}{stop}
\For{all $\langle a_j\rangle\psi\in L(s)$}
{{$S\gets S\cup\{s'\}$} \Comment{$s'$ is a fresh state}\;
{$L(s')=\{\psi\}\cup\{\phi\mid [a_j]\phi\in L(s)\}$}\;
{$d(s')\gets d(s)+1$}\;
{$R_{a_j}\gets R_{a_j}\cup\{(s,s')\}$}\;
{\lIf{$\ff\in L(s')$}{stop}}
{\lIf{$d(s')<\md(\varphi)+1$}{$Q.\mathtt{enqueue}(s')$}
}}
{Non-deterministically choose to go to line 6 or line 21}\;
{Non-deterministically choose $N\in\{1,\dots,|\varphi|-BoxCount\}$}\;
\For{$i\gets 1$  to $N$}{
{Non-deterministically choose $j\in\{1,\dots,k\}$}\;
{$S\gets S\cup\{s'\}$} \Comment{$s'$ is a fresh state}\;
{$L(s')=\{\phi\mid [a_j]\phi\in L(s)\}$}\;
{$d(s')\gets d(s)+1$}\;
{$R_{a_j}\gets R_{a_j}\cup\{(s,s')\}$}\;
\lIf{$\ff\in L(s')$}{stop}
{\lIf{$d(s')<\md(\varphi)+1$}{$Q.\mathtt{enqueue}(s')$}}
{$BoxCount\gets BoxCount+1$}
}}
{Return $S, R_{a_1},\dots, R_{a_k}$}
\end{algorithm}

\begin{algorithm}
\caption{Algorithm $\mathtt{Prime_{2S}}$ decides whether $\varphi\in\mathL_{2S}$ is prime. State $s_0^i$, $i=1,2$, denotes the first state that is added to $S_i$ by $\mathtt{ConPro}(\varphi)$.
Procedure 
$\mathtt{Process}(S,s_0^i,R_{a_1},\dots,R_{a_k})$ 
computes a process corresponding to the output of $\mathtt{ConPro}$ and 
$\mathtt{MLB}(p_1,p_2)$
returns the $\grcd_{\curle_{2S}}(p_1,p_2)$.}\label{alg:prime-twosim}
\DontPrintSemicolon
\KwIn{$\varphi\in\mathL_{2S}$}
{$(S_1,R^1_{a_1},\dots, R^1_{a_k})\gets \mathtt{ConPro}(\varphi)$}\;
{$p_1\gets \process{$S_1,s_0^{1}, R^1_{a_1},\dots, R^1_{a_k}$}$}\;
{$(S_2,R^2_{a_1},\dots, R^2_{a_k})\gets \mathtt{ConPro}(\varphi)$}\;
{$p_2\gets \process{$S_2,s_0^{2}, R^2_{a_1},\dots, R^2_{a_k}$}$}\;
\lIf{some of the two calls of $\mathtt{ConPro}(\varphi)$ stops without an output}{accept}
\Else{
{$g\gets\gcd{$p_1,p_2$}$}\;
\lIf{$g$ is empty}{reject}
\lIf{$g\models\varphi$}{accept}
\lElse{reject}
}
\end{algorithm}

\begin{restatable}{theorem}{twosprimeconp}\label{prop:2S-primality-conpc}
The Formula Primality problem for $\mathL_{2S}$ is \conp-complete.
\end{restatable}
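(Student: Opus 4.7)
\begin{proofsketch}
The lower bound is Proposition~\ref{prop:primality}(a). For the upper bound, the plan is to show that the complement problem, non-primality, is in \NP, witnessed by a small pair of models whose maximum lower bound modulo $\curle_{2S}$ fails $\varphi$. This is the role played by Algorithm~\ref{alg:prime-twosim}: interpreted dually, its rejecting non-deterministic paths are precisely the certificates of non-primality.

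The central lemma to establish is a maximum-lower-bound characterization of primality: a satisfiable $\varphi\in\mathL_{2S}$ is prime iff for every two processes $p_1,p_2$ with $p_i\models\varphi$, the MLB $g=\grcd_{\curle_{2S}}(p_1,p_2)$ also satisfies $\varphi$. The forward direction uses Proposition~\ref{logical_characterizations}: any $\mathL_{2S}$-formula satisfied by $g$ is inherited by both $p_1$ and $p_2$, so $g\not\models\varphi$ would permit extracting from $\mathL_{2S}(p_1)\setminus\mathL_{2S}(g)$ and $\mathL_{2S}(p_2)\setminus\mathL_{2S}(g)$ two disjuncts $\varphi_1,\varphi_2\in\mathL_{2S}$ with $\varphi\models\varphi_1\vee\varphi_2$ but $\varphi\not\models\varphi_i$. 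Conversely, given a witness $\varphi\models\varphi_1\vee\varphi_2$ with $\varphi\not\models\varphi_i$, pick $p_i\models\varphi\wedge\neg\varphi_{3-i}$; the MLB of $p_1,p_2$ then satisfies neither $\varphi_1$ nor $\varphi_2$, hence not $\varphi$. Care will be needed because negation is only applied to $\mathL_{S}$-subformulae inside $\mathL_{2S}$, so the argument must arrange the separating formulae to lie in $\mathL_S$, which is where the two-nested structure of $\mathL_{2S}$ becomes essential.

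The second step is a polynomial-size witness property: if such $p_1,p_2$ exist at all, they can be chosen with $|p_i|$ polynomial in $|\varphi|$. Algorithm $\mathtt{ConPro}$ realizes this by non-deterministically constructing a tableau tree of depth at most $\md(\varphi)+1$ for $\varphi$, resolving $\wedge,\vee$ propositionally, expanding each $\langle a\rangle$-subformula, and---via lines~20--30---inserting at most $|\varphi|$ additional ``free'' box-successors anywhere in the tree. These extra transitions are what allow the constructed process to satisfy negations $\neg\psi$ with $\psi\in\mathL_S$ in controlled ways, which is what governs non-primality in $\mathL_{2S}$. The global budget of $|\varphi|$ free transitions is sufficient because each relevant $\mathL_S$-subformula that matters for distinguishing models of $\varphi$ demands at most one dedicated witness-branch, and $\varphi$ has at most $|\varphi|$ subformulae in total; the resulting LTS is of polynomial size in $|\varphi|$.

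Putting these ingredients together, $\mathtt{Prime_{2S}}$ guesses $p_1,p_2\models\varphi$ via two calls to $\mathtt{ConPro}$, computes $g=\grcd_{\curle_{2S}}(p_1,p_2)$ in polynomial time via $\mathtt{MLB}$, and verifies whether $g\models\varphi$ by polynomial-time model checking on a finite LTS of polynomial size. The algorithm accepts on every path unless $g$ is non-empty and $g\not\models\varphi$, so the rejecting paths are exactly the certificates of non-primality; thus non-primality is in \NP and primality is in \conp. The main obstacle will be the MLB characterization of primality, and in particular the verification that the restricted non-deterministic construction of lines~20--30---with its global cap of $|\varphi|$ free box-transitions---is complete, i.e., that it never misses a small counterexample. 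This requires a careful analysis of how $\neg$-guarded $\mathL_S$-subformulae of an $\mathL_{2S}$-formula can witness differences between processes modulo $\curle_{2S}$.
\end{proofsketch}
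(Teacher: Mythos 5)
Your algorithmic skeleton is exactly the paper's Algorithm~\ref{alg:prime-twosim}, and the maximal-lower-bound characterization you centre on is essentially Lemma~\ref{lem:gcd-satisfies-phi}. Your backward direction is correct and is in fact a more direct route than the paper takes: from ``not prime'' you pick $p_1\models\varphi$ with $p_1\not\models\varphi_1$ and $p_2\models\varphi$ with $p_2\not\models\varphi_2$, and use $\grcd_{\curle_{2S}}(p_1,p_2)\curle_{2S}p_i$ to force the MLB to fail $\varphi$; the paper instead goes the long way round, building a single $\curle_{2S}$-minimum model of $\varphi$ by iterating the pairwise lower-bound property over the finite set of polynomially bounded models and then invoking ``characteristic $=$ satisfiable $+$ prime''. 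However, your forward direction is broken as sketched: two formulae chosen from $\mathL_{2S}(p_1)\setminus\mathL_{2S}(g)$ and $\mathL_{2S}(p_2)\setminus\mathL_{2S}(g)$ separate $g$ from $p_1$ and $p_2$, but nothing forces every \emph{other} model of $\varphi$ to satisfy one of them, so you do not obtain $\varphi\models\varphi_1\vee\varphi_2$. The correct (and easier) argument is via Proposition~\ref{prop:charact-via-primality}: a satisfiable prime $\varphi$ is characteristic for some $p_{\min}$, so $p_{\min}\curle_{2S}p_1,p_2$, hence $p_{\min}\curle_{2S}g$ by maximality of the MLB and $g\models\varphi$ by Proposition~\ref{logical_characterizations}.

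The more serious issue is that the step you yourself label ``the main obstacle''---completeness of the restricted construction in lines~20--30 of $\mathtt{ConPro}$---is where all the work lies, and your proposed justification (one extra branch per ``relevant $\mathL_S$-subformula'') is not the right invariant and does not close the gap. The paper's argument is a dichotomy on simulation equivalence rather than on subformulae: if there are models $r_1,r_2$ of $\varphi$ with $r_1\not\curle_S r_2$, then one run of $\mathtt{ConPro}$ builds $p_2\curle_S r_2$ consistently on $r_2$, and the other run grafts onto a consistent model of $r_1$ a copy of the witness process $\witproc(r_1,p_2)$, whose size is bounded by $|p_2|=O(|\varphi|)$---this is what the $BoxCount$ budget is calibrated to---so that $p_1\not\curle_S p_2$, the MLB fails to exist, and the run rejects (note that the algorithm also rejects when $g$ is empty, a case your summary omits); if instead all models of $\varphi$ are simulation-equivalent, consistently chosen outputs satisfy $p_i\curle_{2S}r_i$, so the computed $g$ is a common $\curle_{2S}$-lower bound of the arbitrary $r_1,r_2$ and Lemma~\ref{lem:gcd-satisfies-phi} finishes the proof. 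Without an argument of this kind your upper bound is not established.
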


\begin{restatable}{corollary}{twoschardp}\label{cor:2S-char-dp}
  Let $|\act|>1$.  Deciding whether a formula in $\mathL_{2S}$ is characteristic for a process within $\mathL_{2S}$, or modulo $\equiv_{2S}$, is 
  in \dpc.
\end{restatable}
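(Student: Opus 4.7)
The plan is to exhibit each variant of the problem as the intersection of an \NP language and a \conp language, so that $\dpc$-membership follows directly from the definition of the class.

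For the ``characteristic within $\mathL_{2S}$'' variant, I would invoke Proposition~\ref{prop:charact-via-primality} to rewrite the condition as the conjunction of (i)~$\varphi$ is satisfiable, and (ii)~$\varphi$ is prime in $\mathL_{2S}$. Condition (i) lies in \NP by Proposition~\ref{prop:sat}(b), and condition (ii) lies in \conp by Theorem~\ref{prop:2S-primality-conpc}. The intersection of these two languages is therefore in $\dpc$ by definition, which already handles this half of the corollary.

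For the ``characteristic modulo $\equiv_{2S}$'' variant, I would invoke Proposition~\ref{prop:char-mod-equiv} to rewrite the condition as (i)~$\varphi$ is satisfiable, together with (ii$'$)~for every $p_1,p_2$ satisfying $\varphi$, $p_1\equiv_{2S} p_2$ holds. Again (i) is in \NP by Proposition~\ref{prop:sat}(b). For (ii$'$), the plan is to adapt the tableau-based machinery behind Algorithm~$\mathtt{Prime_{2S}}$: each successful branch of $\mathtt{ConPro}(\varphi)$ produces a process of size polynomial in $|\varphi|$ that satisfies $\varphi$, and every $\equiv_{2S}$-class of models of $\varphi$ is represented by some such branch. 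An \NP algorithm for the complement of (ii$'$) therefore guesses two runs of $\mathtt{ConPro}(\varphi)$ yielding processes $p_1,p_2$ and verifies $p_1\not\equiv_{2S} p_2$ in polynomial time on these concrete finite processes. This places (ii$'$) in \conp, and the conjunction of (i) and (ii$'$) in $\dpc$.

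The main obstacle is to justify the bounded-model property underlying the \NP algorithm for the complement of (ii$'$): whenever there exist two models of $\varphi$ that are $\equiv_{2S}$-inequivalent, one must show that there exist two such models of polynomial size realisable as outputs of $\mathtt{ConPro}(\varphi)$. This is exactly the completeness statement used to certify Algorithm~$\mathtt{Prime_{2S}}$ in Section~\ref{section:deciding-char-2S}, namely that $\mathtt{ConPro}(\varphi)$ produces representatives of every $\equiv_{2S}$-class of models of $\varphi$ up to $\equiv_{2S}$, and that a polynomial-size representative suffices. Once this is in place, both variants fall within $\dpc$ by the argument above, completing the proof of the corollary.
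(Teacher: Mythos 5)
Your proposal is correct and takes essentially the same route as the paper: the ``within $\mathL_{2S}$'' half is exactly the paper's combination of Proposition~\ref{prop:charact-via-primality}, \NP-satisfiability, and Theorem~\ref{prop:2S-primality-conpc}, and your \conp test for condition (ii$'$) is precisely the paper's algorithm $\mathtt{EquivProc_{2S}}$ from Proposition~\ref{prop:2s-equivalent-processes-conp}, which runs $\mathtt{ConPro}(\varphi)$ twice and checks $p_1\equiv_{2S}p_2$ in polynomial time. The completeness obstacle you flag is real and is discharged in the paper by the ``choose output consistently on $r$'' and witness-embedding lemmas behind $\mathtt{Prime_{2S}}$, exactly as you anticipate.
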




In the case of $\mathL_{2S}$, the upper bound for the Formula Primality problem decreases from \pspace to \conp. This is mainly because, for a satisfiable formula $\varphi\in\mathL_{2S}$, there is always a tableau for $\varphi$---and so a corresponding process satisfying $\varphi$---of polynomial size~\cite{AcetoACI25}. Regarding the satisfiability problem for the logic, an execution of the standard non-deterministic tableau construction~\cite{HalpernM92} must result in a tableau for $\varphi$
(and a corresponding process that satisfies $\varphi$)
and, therefore, we obtain an \NP algorithm~\cite{AcetoACI25}. In contrast, for the Formula Primality problem, we accept the formula $\varphi$ under the following conditions: (i) all executions of the non-deterministic tableau construction fail---implying that $\varphi$ is unsatisfiable and hence prime;  or (ii) we run the tableau construction twice in parallel, and for each pair of executions that return two tableaux for $\varphi$, corresponding to two processes $p_1,p_2$ satisfying $\varphi$, we check whether there is a process $q$ that also satisfies $\varphi$ and is 2-nested-simulated by both $p_1$ and $p_2$. Note that a winning strategy for $A$ in the \nsimpre game on $\varphi$ is equivalent to the second condition for some $\varphi\in\mathL_{nS}$, $n\geq 3$. When we implement the procedure outlined above---see Algorithm~\ref{alg:prime-twosim}---each execution runs in polynomial time and, since we universally quantify over all such executions, the problem lies in \conp. 



We introduce two algorithms, namely $\mathtt{ConPro}$ in Algorithm~\ref{alg:process-construction}, and $\mathtt{Prime_{2S}}$ in Algorithm~\ref{alg:prime-twosim}. Let $\varphi\in\mathL_{2S}$ be an input to the first algorithm. Lines 1--19 of $\mathtt{ConPro}$ are an implementation of the tableau construction for $\varphi$---see~\cite{HalpernM92}. 
If $\varphi$ is unsatisfiable, then $\mathtt{ConPro}(\varphi)$ stops without returning an output because it stops at lines 12 or 18. In the case that $\mathtt{ConPro}(\varphi)$ returns an output, then its output is an LTS corresponding to a process that satisfies $\varphi$. If there are $r_1,r_2$ satisfying $\varphi$ such that $r_1\not\curle_S r_2$, the tableau construction cannot guarantee the generation of two processes that are not simulation equivalent. This is precisely the role of lines 20--30 in $\mathtt{ConPro}$. Given such processes $r_1, r_2$, when run twice, the algorithm can choose two processes $p_1,p_2$ based on $r_1,r_2$. During construction of $p_1$, lines 20--30 can be used to add  to $p_1$ up to $|\varphi|$ states that witness the failure of $r_1\curle_S r_2$. Note that in the case of the \simequiv game, player $B$ could follow a similar strategy by using move B($\square$) and introducing a trace that witnesses $r_1\not\curle_S r_2$. In the case of $\mathL_{2S}$, since the full tableau is constructed, the algorithm needs only to construct a ``small'' process that serves as a witness to the same fact. 

Algorithm $\mathtt{Prime_{2S}}$ decides whether its input $\varphi\in\mathL_{2S}$ is prime: $\varphi$ is prime iff every execution of $\mathtt{Prime_{2S}}(\varphi)$ accepts. This algorithm runs $\mathtt{ConPro}(\varphi)$ twice. If $\mathtt{ConPro}(\varphi)$ fails to return an output, $\mathtt{Prime_{2S}}(\varphi)$  rejects at line 5---this line deals with unsatisfiability. For every two processes $p_1,p_2$ that satisfy $\varphi$, at line 7, $\mathtt{Prime_{2S}}(\varphi)$ constructs their maximal lower bound, denoted $\grcd_{\curle_{2S}}(p_1,p_2)$, which is a process $g$ that is 2-nested-simulated by both $p_i$'s and $r\curle_{2S} g$, for every process $r$ such that $r\curle_{2S} p_i$. Processes $p_1,p_2$ have a maximal lower bound iff $p_1\equiv_S p_2$.
In case $\grcd_{\curle_{2S}}(p_1,p_2)$ does not exist, the algorithm discovers two processes satisfying $\varphi$ such that there is no process that is 2-nested-simulated by both of them and it rejects the input---$\varphi$ is not prime. On the other hand, if $\grcd_{\curle_{2S}}(p_1,p_2)$ exists, then there is a process that is 2-nested simulated by both $p_i$'s and it can be constructed in polynomial time. It remains to check whether $\grcd_{\curle_{2S}}(p_1,p_2)$ satisfies $\varphi$. If so, then the second condition described above is met, and the algorithm accepts. If $\grcd_{\curle_{2S}}(p_1,p_2)\not\models\varphi$ it can be shown that there is no process $r$ satisfying $\varphi$ that is 2-nested-simulated by both $p_i$'s, and the algorithm rejects at line 10. This establishes Theorem~\ref{prop:2S-primality-conpc}. By slightly adjusting  $\mathtt{Prime_{2S}}$, we can show that deciding whether all processes satisfying a formula in $\mathL_{2S}$ are 2-nested-simulation equivalent is in \conp. By these results and the \NP-completeness of the satisfiability problem for $\mathL_{2S}$, we obtain the upper bound given in Corollary~\ref{cor:2S-char-dp}.
\bibliographystyle{eptcs}
\bibliography{bibliography}

\end{document}